\newif\ifptitle
\newif\ifpnumber
\newcounter{para}
\begin{document}
% \doparttoc % Tell to minitoc to generate a toc for the parts
% \faketableofcontents % Run a fake tableofcontents command for the partocs

% \captionsetup[figure]{labelfont={bf},name={Fig.},labelsep=period} % Make caption Fig.1.
% \captionsetup[figure]{singlelinecheck=false, justification=justified} 

\title{Leveraging Symmetry Merging in Pauli Propagation}
\date{\today}

\author{Yanting Teng}
\affiliation{Institute of Physics, Ecole Polytechnique F\'{e}d\'{e}rale de Lausanne (EPFL),   Lausanne, Switzerland}
\affiliation{Centre for Quantum Science and Engineering, Ecole Polytechnique F\'{e}d\'{e}rale de Lausanne (EPFL), Lausanne, Switzerland}
\email{yanting.teng@epfl.ch}

\author{Su Yeon Chang}
\affiliation{Institute of Physics, Ecole Polytechnique F\'{e}d\'{e}rale de Lausanne (EPFL),   Lausanne, Switzerland}
\affiliation{Theoretical Division, Los Alamos National Laboratory  (LANL), Los Alamos, United States}

\author{Manuel S. Rudolph}
% \email{manuel.rudolph@epfl.ch}
\affiliation{Institute of Physics, Ecole Polytechnique F\'{e}d\'{e}rale de Lausanne (EPFL),   Lausanne, Switzerland}
\affiliation{Centre for Quantum Science and Engineering, Ecole Polytechnique F\'{e}d\'{e}rale de Lausanne (EPFL),   Lausanne, Switzerland}

\author{Zo\"{e} Holmes}
\affiliation{Institute of Physics, Ecole Polytechnique F\'{e}d\'{e}rale de Lausanne (EPFL),   Lausanne, Switzerland}
\affiliation{Centre for Quantum Science and Engineering, Ecole Polytechnique F\'{e}d\'{e}rale de Lausanne (EPFL),   Lausanne, Switzerland}
\affiliation{Algorithmiq Ltd, Kanavakatu 3 C, FI-00160 Helsinki, Finland}

\begin{abstract}
We introduce a symmetry-adapted framework for simulating quantum dynamics based on Pauli propagation. When a quantum circuit possesses a symmetry, many Pauli strings evolve redundantly under actions of the symmetry group. We exploit this by merging Pauli strings related through symmetry transformations. This procedure, formalized as the \textit{symmetry-merging Pauli propagation} algorithm, propagates only a minimal set of orbit representatives. Analytically, we show that symmetry merging reduces space complexity by a factor set by orbit sizes, with explicit gains for translation and permutation symmetries. Numerical benchmarks of all-to-all Heisenberg dynamics confirm improved stability, particularly under truncation and noise. Our results establish a group-theoretic framework for enhancing Pauli propagation, supported by open-source code demonstrating its practical relevance for classical quantum-dynamics simulations.
\end{abstract}

\maketitle

\vspace{1mm}

\section{Introduction}
% \textit{Introduction}. 
Understanding the dynamics of quantum operators lies at the heart of quantum many-body physics and quantum information theory. Among many practical applications, operator dynamics provide physical insights into quantum information scrambling, many-body chaos, and dynamical correlations in interacting systems for quantum metrology and sensing~\cite{plodzien2025lecture, sieberer2019digital,  nandy2025quantum, montenegro2025quantum}. 

There now exists a broad zoo of classical simulation methods for approximating quantum dynamics. Matrix product states (MPS) have been remarkably successful in one dimension~\cite{Paeckel_2019, Haegeman_2011, Vidal_2004, Verstraete_2004, White_2004}, and a variety of higher-dimensional generalizations are advancing rapidly~\cite{mandra2025heuristic,Alkabetz_2021, Tindall_2023,tindall2025dynamics,rudolph2025simulating,gray2025tensor}. Additional theoretical frameworks include neural quantum states~\cite{Carleo_2017, carleo2024simulatingadiabaticquantumcomputation}, mean-field approaches to operator hydrodynamics~\cite{khemaniOperatorSpreadingEmergence2018, rakovszkyDiffusiveHydrodynamicsOutTimeOrdered2018}, and specialized semiclassical techniques such as the discrete truncated Wigner approximation~\cite{wurtz2018cluster}. Yet no single “winner-takes-all” method exists that reliably handles all problems.

\begin{figure}[t!]
    \centering
    \includegraphics[width=\linewidth]{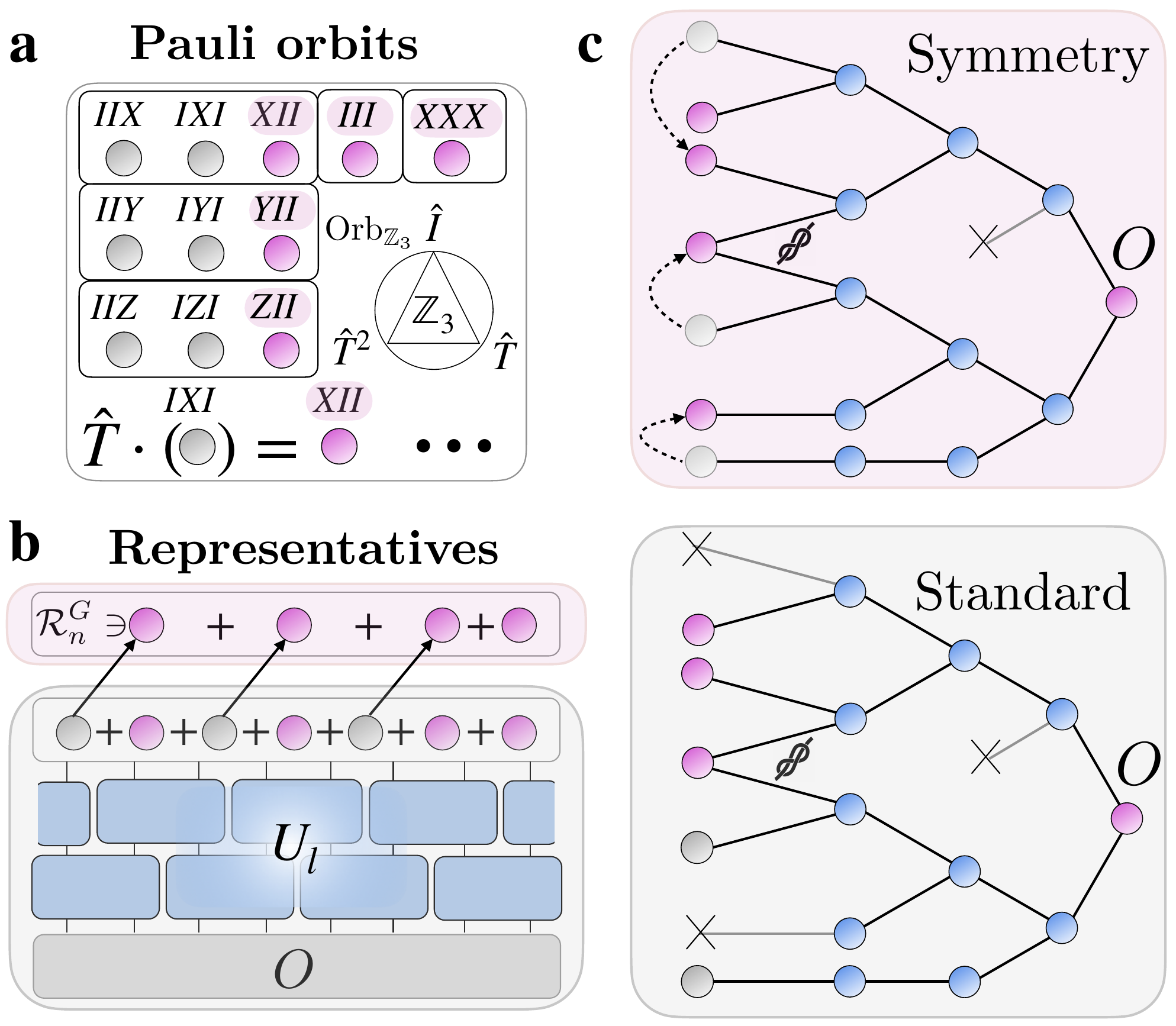}
    \caption{\textit{Schematics of symmetry-merging Pauli propagation.} \textbf{a}) The symmetry group (e.g., $\mathbb{Z}_3$ generated by $\hat{T}$) partitions the set of all $n$-qubit Pauli strings into orbits ($\operatorname{Orb}_{\mathbb{Z}_3}$). Each orbit can be represented by a single \textit{orbit representative} (pink circles).  \textbf{b}) Instead of operating on the full space, symmetry PP evolves a minimal set of representatives $\mathcal{R}^G_n$ as in \equref{eq:O-merging}. After $l$-th unitary layer, the evolved operator $U_l^\dagger O U_l$ contains many Paulis. The merging of symmetry-equivalent Paulis by their orbit representatives reduces the number of distinct terms. \textbf{c}) Comparison of the symmetry PP (top) and standard Pauli propagation (bottom). After each symmetric layer, non-representative Paulis (gray) are merged to orbit representatives (pink), potentially recovering terms that would otherwise be truncated (crossed nodes). 
    }
    \label{fig:fig1}
\end{figure}

In many scenarios, it is effective to study the evolution of observables in the Heisenberg picture rather than tracking the full state evolution.  
This idea has been explored using MPS to simulate operators dynamics in one dimension via dissipation-assisted operator evolution (DAOE)~\cite{rakovszkyDissipationassistedOperatorEvolution2022}. Another approach to approximate operator dynamics is through \emph{Pauli propagation} (PP), which tracks the coefficients of Pauli operators under successive circuit layers~\cite{rallSimulationQubitQuantum2019,aharonov2022polynomial, beguvsic2023simulating, fontana2023classical, shao2023simulating, rudolph2023classical, schuster2024polynomial, angrisani2024classically, gonzalez2024pauli, lerch2024efficient, cirstoiu2024fourier, angrisani2025simulating, fuller2025improved, rudolph2025simulating, angrisani2025simulating, rakovszkyDissipationassistedOperatorEvolution2022}. Pauli propagation encodes Pauli strings as bitstrings and provides a fast bitwise processing of evolution for dynamical quantities involving two procedures: 1) truncating irrelevant Pauli operators and 2) merging Paulis that are identical~\cite{rudolph2025pauli}. However, as the system evolves, the number of nonzero Pauli coefficients typically grows exponentially, leading to severe memory and runtime bottlenecks that limit standard Pauli propagation~\cite{loizeauQuantumManybodySimulations2024, dowling2025bridging, dowling2025magic}. 

% A similar tensor network algorithm using Heisenberg evolution based on MPS called DAOE is limited by the exponential growth of operator entanglement~\cite{rakovszkyDissipationassistedOperatorEvolution2022}. Relatedly, recent studies have explored the \emph{magic} and entanglement of local operators in the Heisenberg picture~\cite{dowling2025magic, dowling2025bridging}, highlighting a rich interplay between operator growth and computational complexity. Most recently, symmetry consideration has been incorporated to further improve the efficiency of PP~\cite{loizeauQuantumManybodySimulations2024, haghshenas2025digital, loizeau2025opening}. Furthermore, if the symmetry group is continuous, frameworks based on representation theory has been proposed to analyze the action of such symmetric quantum circuits~\cite{cirstoiuFourierAnalysisFramework2024}.

In this work, we present a symmetry-adapted framework for Pauli propagation by modifying the merging subroutines based on group theory as sketched in Fig.~\ref{fig:fig1}. In particular, we focus on discrete symmetries in many physically relevant quantum circuits and Hamiltonians such as translation symmetry. By merging Pauli operators connected by symmetry, we can substantially reduce the number of distinct coefficients that must be tracked during evolution. In \thmref{thm:symmetry}, we formalize the invariance of expectation values under such transformations, ensuring that merging does not alter physical predictions. We further quantify the resulting reduction in the number of propagated Pauli terms in \thmref{thm:space-complexity}, showing that the computational space requirement improves by a factor proportional to the size of the symmetry group in the asymptotic limit of large system size.

Our \textit{symmetry-merging Pauli propagation} (symmetry PP) algorithm generalizes naturally across any discrete symmetry groups and can be combined with noise models or other truncation schemes, as illustrated in our numerical examples. The resulting method achieves both improved stability and reduced space complexity,  while memory is a key limiting factor of the \textit{standard Pauli propagation} (standard PP). Various works have exploited symmetries in one way or another~\cite{loizeauQuantumManybodySimulations2024,loizeau2025opening,d2025circuit,haghshenas2025digital, parker2019universal}. We present the first principled Pauli propagation algorithm with exhaustive symmetry merging throughout the simulation and with principled guarantees for correctness and performance. 

\medskip
\paragraph*{The Role of Symmetry.} Consider a unitary $U$ composed of $L$ layers $U = \prod_{l=1}^{L} U_l$. For instance, the $l$-th unitary layer $U_l$ can be a trotterized Hamiltonian layer $U_l = \prod_{j} e^{-i h^{(1)}_j t}e^{-i h^{(2)}_j t}$ for the translational invariant Hamiltonian $H = \sum_j h^{(1)}_j + h^{(2)}_j$ or only a sublayer $U_l = \prod_{j} e^{-i h^{(1)}_j t}$. Let $G$ be a finite symmetry group and $A_g$ be the representation of $g \in G$. Each layer is invariant under $G$ defined by the \emph{group action} $g \cdot(\,)$:
\begin{equation}\label{eq:action}
    g \cdot (U_l) = U_l \Longleftrightarrow A_g U_l A_g^{-1} = {U_l}, \quad \forall \, g \in G.
\end{equation}

The main goal of this paper is to more efficiently determine how operators evolve under these symmetric unitaries. 
If the initial state ($\rho$) is also symmetric such that $g \cdot (\rho) = \rho$, then expectation values of normalized Pauli strings $s \in \mathcal{P}_n:=1/\sqrt{2^n}\{I, X, Y, Z\}^{\otimes n}$ related by a symmetry transformation are identical by the cyclic property of trace,
\begin{equation}\label{eq}
\Tr[\rho \,s] = \Tr[g \cdot (\rho) s] = \Tr[\rho \,g \cdot (s)], \quad \forall g \in G.
\end{equation}
Intuitively, evolving a Pauli string $g\cdot(s)$ related by symmetry is the same as evolving $s$.

Under the evolution of a symmetric unitary, this redundancy allows observables to be computed from a reduced set of representative $n$-qubit Pauli strings, denoted as $\mathcal{R}^G_n$. While we focus on unitary dynamics, our framework generalizes for open quantum systems. We illustrate this flexibility in our numerical example by adding noise layers modeling a local Lindbladian dynamics~\cite{lidar2019lecture, breuer2002theory}.

% \vspace{1mm}
\section{Symmetry-merging Pauli propagation}
% \noindent \emph{Symmetry-merging Pauli propagation.} 
Given an initial state $\rho$ and the unitary $U = \prod_{l=1}^L U_l$ with discrete symmetry $G$, our goal is to compute the expectation value of the evolved target observable $\Tr[\rho\, U^\dagger O U]$. 
At a high level, the \emph{standard PP} algorithm evolves a target observable $O$ using the Heisenberg equation of motion. The evolved operator can be expanded in the Pauli basis as a \textit{Pauli path integral}, 
\vspace{-1mm}
\begin{equation}
     U^{\dagger} O U = \sum_{s_0, \cdots s_L \in \mathcal{P}_n} \Tr[s_L O] \prod_{l=1}^{L} \Tr[s_{l-1} U_l^{\dagger} s_{l} U_l] s_0.
\end{equation}
Here, the summation over normalized Pauli strings $s_l \in \mathcal{P}_n$ enumerates all possible paths during the evolution (see \appref{app:pp}). 
Here we propose avoiding redundant computations by tracking only a minimal set of representative Pauli strings from each orbit under $G$ by introducing \textit{symmetry merging}.

\vspace{1mm}

\noindent\rule{8.5cm}{0.8pt}
\vspace{-2mm}
\noindent
\textbf{Algorithm 1}  

\vspace{1mm}
\noindent Symmetry-merging Pauli propagation (symmetry PP)
\noindent\rule{8.5cm}{0.8pt}
\label{alg:symmetry}
\begin{enumerate} 
	\item Prior to evolution, merge Paulis that are related by a symmetry transformation such that 
    \begin{equation}
        \widetilde{O}_L := \sum_{s \in \mathcal{R}^{G}_n} \sum_{g \in G}\Tr[O\, g \cdot(s)] s \, .
    \end{equation}
	\item Symmetry-merging propagation: For each layer $l=L, L-1, \cdots, 1$, to the following:
    \begin{enumerate}
        \item Compute the Heisenberg evolved observable $O_{l-1} = U_l^{\dagger} \widetilde{O}_{l} U_l$ in the Pauli basis. 
        \item Merge Paulis by orbit representatives $\mathcal{R}^{G}_n$:
     \begin{equation}\label{eq:O-merging}
         \widetilde{O}_{l-1} := \sum_{s \in \mathcal{R}^{G}_n}\sum_{g \in G} \Tr[O_{l-1}\, g \cdot (s)]s.
     \end{equation}
	\label{alg:merging} 
    \end{enumerate}
    \vspace{-5mm}
	\item Compute the inner product between the initial state and the merged observable $\Tr[\rho \, \widetilde{O}_0]$. 
\end{enumerate}
\vspace{-4mm}
\noindent\rule{8.5cm}{0.8pt}

More precisely, consider any Pauli strings $s \in \mathcal{P}_n$ throughout the evolution, we use $\operatorname{Orb}_{G}(s):= \{ g\cdot(s)\, |\, g \in G\}$ to denote the \emph{Pauli orbits} of $s$ under the group action defined in \equref{eq:action}. The \emph{Pauli orbit representatives}, denoted as $\mathcal{R}^G_{n} \subset \mathcal{P}_{n}$, provide a natural minimal set of Pauli strings that need to keep track of. The orbits of a full set of orbit representatives form a disjoint union of all Pauli strings $\mathcal{P}_n = \coprod_{s\in \mathcal{R}^G_{n}} \operatorname{Orb}_{G}(s)$.

For example, as sketched in Fig.~\ref{fig:fig1}, we consider a $3$-qubit system under symmetry group $\mathbb{Z}_3$, represented by translation operators $\hat{I}, \hat{T}, \hat{T}^2$. Here, the group action is defined by translating a Pauli string to the left, e.g. $\hat{T}\cdot(IXI) = XII$. The Pauli orbit of an (unnormalized) string $XII$ is $\operatorname{Orb}_{\mathbb{Z}_3}(XII) = \{IIX, IXI, XII\}$ as illustrated in \figref{fig:fig1}\textbf{a}. The set of complete orbit representatives $\mathcal{R}^{G}_n = 1/\sqrt{2^n} \{III, XII, YII, ZII, \dots, XXX, \cdots \}$ contains $24$ elements and is given by \exref{ex:Pauli-orbits-zn}. For more detailed definitions and a review of group theory, we refer the reader to \appref{app:group}. 

At a high level, symmetry PP involves two subroutines: a) perform standard PP; b) merge Paulis to their orbit representatives $\mathcal{R}^G_{n}$ (\figref{fig:fig1}\textbf{b} and \figref{fig:fig1}\textbf{c}). It is formally given by Algorithm~\hyperref[alg:symmetry]{1}.
In particular, the merging subroutine of step~\hyperref[alg:merging]{2} identifies all Pauli strings equivalent under symmetry transformations and merges them into their corresponding representative of their orbit. The key intuition is that when a quantum circuit exhibits symmetry, many Pauli strings contribute to the expectation value equally under the unitary transformation.

In \appref{app:merging-subroutines}, we outline merging subroutines for step~\ref{alg:merging} and discuss their trade-offs compared to standard PP. More specifically in \appref{app:merging_subroutines}, we discuss the numerical implementations of the merging subroutine and how they can be practically more efficient than the most generic merging subroutine.

\section{Analytical guarantees}
\noindent 
% \emph{Analytical guarantees.} 
We provide rigorous justifications to symmetry PP algorithm outlined in the previous section. First, we show that symmetry PP outputs the correct result for the dynamics of an operator in \thmref{thm:symmetry}. Then we compare its space complexity comparing to the standard PP in \thmref{thm:space-complexity}. Additionally, we comment on time complexity in \appref{app:complexity}. 

\begin{theorem}[\textbf{Symmetry PP}]\label{thm:symmetry}
    Let $\rho$ be a symmetric initial state and $U = \prod_{l=1}^L U_l$ be the $L$-layer unitary. Suppose each $l$-th layer has a discrete symmetry $G$ with representations $A_g$ such that $\commu{A_g}{\rho} = \commu{A_g}{U_l}=0, \forall g \in G$. Let $\Tr[\rho\, U^\dagger O U]$ be the target expectation value of an observable $O$. Then the expectation value of the merged operator given by Algorithm~\hyperref[alg:symmetry]{1} as $\widetilde{O}_0$ equals the target value
    \begin{equation}
        \Tr[\rho \, \widetilde{O}_0] = \Tr[\rho\, U^\dagger O U].
    \end{equation}
\end{theorem}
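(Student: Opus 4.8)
The plan is to factor the algorithm through a single linear map and show that this map acts as the identity under the trace against any symmetric operator. Writing the merging step of Algorithm~1 as
\[
\mathcal{M}(Y) := \sum_{s \in \mathcal{R}^{G}_n}\Bigl(\sum_{g\in G}\Tr[Y\, g\cdot(s)]\Bigr)\, s ,
\]
the algorithm produces $\widetilde{O}_L=\mathcal{M}(O)$ and $\widetilde{O}_{l-1}=\mathcal{M}(U_l^{\dagger}\widetilde{O}_l U_l)$. The central lemma I would establish first is that for \emph{any} operator $\sigma$ with $\commu{A_g}{\sigma}=0$ for all $g\in G$, and any operator $Y$, one has $\Tr[\sigma\,\mathcal{M}(Y)]=\Tr[\sigma\,Y]$. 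Everything else is bookkeeping around this identity.

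To prove the lemma, the one input is that $\Tr[\sigma\,\cdot]$ is constant along orbits: since $g\cdot(s)=A_g s A_g^{-1}$ and $A_g^{-1}\sigma A_g=\sigma$, we have $\Tr[\sigma\, g\cdot(s)]=\Tr[\sigma\, s]$, which is exactly the identity used in the ``Role of Symmetry'' discussion. Substituting $\Tr[\sigma\,s]=\Tr[\sigma\, g\cdot(s)]$ into the definition gives $\Tr[\sigma\,\mathcal{M}(Y)]=\sum_{s\in\mathcal{R}^G_n}\sum_{g\in G}\Tr[\sigma\, g\cdot(s)]\,\Tr[Y\, g\cdot(s)]$. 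Because the representatives satisfy $\mathcal{P}_n=\coprod_{s}\operatorname{Orb}_{G}(s)$, the assignment $(s,g)\mapsto g\cdot(s)$ sweeps out the full Pauli basis, and by orthonormality of the Paulis (Parseval) the sum collapses to $\sum_{p\in\mathcal{P}_n}\Tr[\sigma\,p]\,\Tr[Y\,p]=\Tr[\sigma\,Y]$.

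The theorem then follows by a telescoping argument. I would define the partial evolutions $W_l:=U_l U_{l-1}\cdots U_1$ (with $W_0=I$, $W_L=U$) and the quantities $E_l:=\Tr[\rho\, W_l^{\dagger}\widetilde{O}_l W_l]$, so that $E_0=\Tr[\rho\,\widetilde{O}_0]$ is the algorithm's output. Peeling off one layer, $E_l=\Tr[\,W_{l-1}\rho W_{l-1}^{\dagger}\,(U_l^{\dagger}\widetilde{O}_l U_l)\,]$ while $E_{l-1}=\Tr[\,W_{l-1}\rho W_{l-1}^{\dagger}\,\mathcal{M}(U_l^{\dagger}\widetilde{O}_l U_l)\,]$, so $E_l=E_{l-1}$ is precisely the lemma applied with $\sigma=W_{l-1}\rho W_{l-1}^{\dagger}$. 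The step that makes this legitimate is checking that $\sigma$ is symmetric: each $U_j$ commutes with $A_g$, hence so does $W_{l-1}$, giving $A_g(W_{l-1}\rho W_{l-1}^{\dagger})A_g^{-1}=W_{l-1}(A_g\rho A_g^{-1})W_{l-1}^{\dagger}=W_{l-1}\rho W_{l-1}^{\dagger}$. A final application of the lemma to the (likewise symmetric) operator $U\rho U^{\dagger}$ removes the initial merge, yielding $E_L=\Tr[\rho\, U^{\dagger}O U]$, and the chain $E_0=E_1=\cdots=E_L$ closes the argument.

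The step I expect to be the main obstacle is the normalization inside the lemma when the group action is not free. If a string $s$ has a nontrivial stabilizer, then $\operatorname{Orb}_{G}(s)$ has fewer than $|G|$ elements and the map $(s,g)\mapsto g\cdot(s)$ covers each Pauli with multiplicity $|G|/|\operatorname{Orb}_{G}(s)|$, so the collapse to $\Tr[\sigma\,Y]$ acquires this orbit-dependent factor and the clean identity fails. The natural resolution is to read the merging coefficient as a sum over \emph{distinct} orbit elements (equivalently, to divide the $g$-sum by $|\mathrm{Stab}(s)|$), which is automatically correct for free orbits such as translations acting on aperiodic strings; I would state this as a standing convention and verify that the weighted representative sum reproduces Parseval exactly. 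Once this normalization is pinned down, the commutation and telescoping steps are routine.
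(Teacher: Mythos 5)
Your proof is correct (given the normalization convention you flag at the end) and takes a genuinely different route from the paper's. The paper first proves a per-string lemma: propagating $g\cdot(s_1)$ through a symmetric unitary and contracting against any $G$-invariant weight function $f$ gives the same value as propagating $s_1$; it then runs an induction over circuit layers in Pauli-path-integral form, re-expanding nested sums over $\mathcal{R}^G_n$ and $G$ at each inductive step. You instead isolate the merging step as a single linear map $\mathcal{M}$ and show it is invisible under the trace against any symmetric operator, $\Tr[\sigma\,\mathcal{M}(Y)]=\Tr[\sigma\,Y]$, via orbit-constancy of $\Tr[\sigma\,\cdot\,]$ plus Parseval; the layer structure then collapses to cyclicity of the trace together with the observation that the partially evolved state $W_{l-1}\rho W_{l-1}^{\dagger}$ (in your notation) stays symmetric because every $U_j$ commutes with every $A_g$. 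This is more modular: the circuit enters only through the trivial fact that symmetric unitaries preserve symmetric states, whereas the paper's induction re-derives essentially the same cancellation inside explicit multiple sums. What the paper's string-level formulation buys in exchange is a statement expressed at the level of individual Pauli transfer coefficients, which mirrors how the algorithm is actually implemented string by string, and an explicit closed form for $\widetilde{O}_0$ as a merged path integral.

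The normalization obstacle you single out is genuine, and your resolution is the correct one; in fact it patches an imprecision in the paper itself. Taken literally, the merging coefficient $\sum_{g\in G}\Tr[Y\, g\cdot(s)]$ counts each element of $\operatorname{Orb}_G(s)$ with multiplicity $\abs{\operatorname{Stab}_G(s)}=\abs{G}/\abs{\operatorname{Orb}_G(s)}$. Concretely, for $G=\mathbb{Z}_2$ acting on two qubits, the string $ZZ$ is fixed by both group elements, so each literal application of the merge doubles its coefficient and $\Tr[\rho\,\widetilde{O}_0]$ would overshoot the target by a factor growing with the number of merges. The paper's own proof silently performs the substitution $\sum_{s_0\in\mathcal{P}_n}\leftrightarrow\sum_{g_0\in G}\sum_{s_0\in\mathcal{R}^G_n}$ in both directions, which is an equality only under your convention of summing over distinct orbit elements, and the implemented subroutine in the appendix (which adds each distinct Pauli's coefficient to its orbit representative exactly once) realizes precisely that convention. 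So your "standing convention" — equivalently, dividing the $g$-sum by $\abs{\operatorname{Stab}_G(s)}$ — is not an extra assumption but the intended reading of Algorithm 1, and with it your lemma and telescoping argument go through without gaps.
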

\noindent 
Computing the expectation value using only the Pauli representatives is sufficient under symmetric evolutions. 
We provide a proof of \thmref{thm:symmetry} in \appref{app:symmetry-merging}. While the present theorem is applicable to exact evolution, we perform numerical experiments and analyze the effects of truncations commonly adopted in standard PP. A major limiting factor of standard PP is the memory computational cost of storing Pauli strings being propagated. We now discuss the key advantage of memory saving using  symmetry PP. 

\begin{theorem}[\textbf{Space complexity} (informal)]\label{thm:space-complexity}
        Let $G$ be the discrete symmetry group of an $n$-qubit system. The symmetry PP defined by Algorithm~\hyperref[alg:symmetry]{1} reduces the space complexity lower bounded by a ratio 
        \begin{equation}
            r = \frac{\abs{\mathcal{R}^{G}_n}}{4^n} < 1,
        \end{equation} 
        compared to standard PP, where $\abs{\mathcal{R}^{G}_n}$ corresponds to the number of Pauli orbit representatives. 
\end{theorem}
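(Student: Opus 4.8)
The plan is to reduce the claim to a counting statement about group orbits and then tie it to the per-layer memory footprint of the two algorithms. The space cost of standard PP is controlled by the number of distinct Pauli strings that can be simultaneously populated during propagation, which in the worst case is the full set $\abs{\mathcal{P}_n} = 4^n$. Symmetry PP instead stores a single coefficient per orbit after the merging step \equref{eq:O-merging}; since the orbits form the disjoint decomposition $\mathcal{P}_n = \coprod_{s\in\mathcal{R}^G_n}\operatorname{Orb}_G(s)$, the number of stored coefficients equals the number of orbits, namely $\abs{\mathcal{R}^G_n}$. The worst-case memory ratio is therefore exactly $r = \abs{\mathcal{R}^G_n}/4^n$, so the whole statement rests on showing $r<1$ and on quantifying how small $r$ is for the symmetries of interest.

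To establish $r<1$, I would invoke the orbit-counting (Burnside) lemma,
\begin{equation}
  \abs{\mathcal{R}^G_n} = \frac{1}{\abs{G}}\sum_{g\in G}\abs{\operatorname{Fix}_n(g)},
\end{equation}
where $\operatorname{Fix}_n(g) = \{\,s\in\mathcal{P}_n : g\cdot(s)=s\,\}$. The identity fixes everything, contributing $\abs{\operatorname{Fix}_n(e)} = 4^n$, whereas any nontrivial $g$ moves at least one Pauli string, so $\abs{\operatorname{Fix}_n(g)} < 4^n$. As long as $G$ acts nontrivially on $\mathcal{P}_n$ — which holds for any genuine symmetry — at least one term in the sum is strictly smaller, the average drops below $4^n$, and we conclude $\abs{\mathcal{R}^G_n} < 4^n$, i.e.\ $r<1$.

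The quantitative gain then follows from estimating the fixed-point sets. For a group acting by permuting the $n$ qubit sites, a string is fixed by $g$ precisely when it is constant on each cycle of the induced permutation $\sigma_g$, giving $\abs{\operatorname{Fix}_n(g)} = 4^{c(\sigma_g)}$ with $c(\sigma_g)$ the number of cycles. For translation symmetry $G=\mathbb{Z}_n$, translation by $k$ has $\gcd(n,k)$ cycles, so
\begin{equation}
  \abs{\mathcal{R}^{\mathbb{Z}_n}_n} = \frac{1}{n}\sum_{k=0}^{n-1}4^{\gcd(n,k)}.
\end{equation}
The $k=0$ term dominates as $4^n$, while every nontrivial term is at most $4^{n/2}$ and hence exponentially subleading, yielding $\abs{\mathcal{R}^{\mathbb{Z}_n}_n} = (4^n/n)\,(1+o(1))$ and $r\to 1/\abs{G}$ in the large-$n$ limit. (As a sanity check, this formula returns $\tfrac{1}{3}(4^3+4+4)=24$ for the $n=3$, $\mathbb{Z}_3$ example quoted in the text.) An analogous estimate for the symmetric group, where nontrivial permutations again have strongly suppressed fixed-point sets, reproduces the advertised reduction set by orbit sizes.

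The step I expect to be the main obstacle is not the orbit count itself but pinning down the cost model behind ``space complexity,'' since under truncation the number of \emph{active} Pauli strings is typically far below $4^n$. The honest reading is that $r$ bounds the ratio of worst-case (fully populated) memory footprints. Making this precise requires arguing that merging never inflates the per-layer support — the map sending each populated string to its orbit representative lands inside $\mathcal{R}^G_n$, whose size is $\abs{\mathcal{R}^G_n}$ — and that the auxiliary structure used to identify representatives is subleading relative to the stored coefficient table. I would handle this by designating the coefficient table as the dominant memory object and bounding the representative-lookup overhead as lower order.
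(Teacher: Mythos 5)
Your proposal is correct and takes essentially the same route as the paper's own proof (Theorem~\thmref{thm:space-complexity-proof} in \appref{app:complexity}): both identify the worst-case space cost with the number of stored Pauli coefficients under the assumption of full operator support, reduce symmetry PP's cost to the number of orbits via the disjoint decomposition, and count those orbits with Burnside's lemma (\corref{cor:burnside_Pauli}), with your $\gcd$-based evaluation for $\mathbb{Z}_n$ matching the paper's \exref{ex:Zn_app}. Your explicit argument that a nontrivially acting group forces $r<1$, and your remark that truncation makes the bound a worst-case statement, are small refinements the paper leaves implicit rather than a different approach.
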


\begin{figure}[t!]
    \centering
    \includegraphics[width=0.9\linewidth]{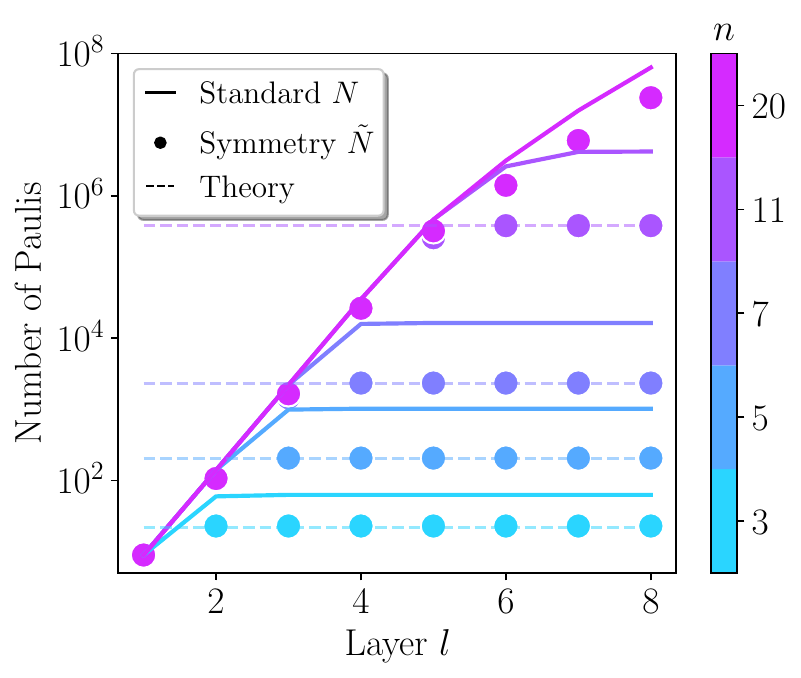}
    \caption{\textit{Reduced Paulis in $1d$ Ising dynamics.} Scaling of the number of propagated Pauli strings with circuit layers $l$ for system sizes from $n = 3$ to $n=20$ (color bar). Lines plot standard PP $N$ (Standard) and circles denote the symmetry PP $\tilde{N}$ (Symmetry) under translation symmetry. Dashed lines show theoretical predictions for the number of distinct Pauli terms. Symmetry merging substantially suppresses the exponential growth of propagated Paulis, saturating to the theoretical scaling of $\frac{1}{n}(4^n + 4(n-1)))$ terms discussed in Example~\ref{ex:Zn}.}
    \label{fig1:symmetry_Paulis}
\end{figure}

\noindent A smaller ratio represents more saving on Pauli strings. The lower bound $r$ is saturated when the evolved operator spans the full operator space, typically relevant for a long-time evolution under ergodic quantum dynamics.
The number of orbit representatives $\abs{\mathcal{R}^{G}_n}$ depends on the symmetry group $G$ governing the system. While it is difficult to generalize the bound as a function of the number of qubits $n$, we have found that in the asymptotic limit of $n \gg 1$, the space saving ratio is inversely proportional to the group size $r\in \Theta(\frac{1}{\abs{G}})$. In the following, we give two examples and provide their proofs with a summary of space complexity for different groups in \appref{app:complexity} for interested readers. 

\begin{example}[Translation symmetry $\mathbb{Z}_n$ in $1d$]\label{ex:Zn}
 Consider the translational symmetry group $\mathbb{Z}_n$ on $n$ qubits. If $n$ is a prime number, 
\begin{equation}
  r = \frac{1}{n}\left(\frac{4^n + 4(n-1))}{4^n}\right).
\end{equation}
For $n \gg 1$, we have a linear space saving $r \in \Theta(\frac{1}{n})$. 
\end{example}
        
\begin{example}[Permutation symmetry $S_n$]
 Consider the permutation group $S_n$. The space saving ratio is
\begin{equation}
 r = \frac{1}{4^n} \binom{n+3}{3}\;. 
\end{equation} 
For $n \gg 1$, the number of orbit representatives scales as $\Theta(n^3)$, leading to an exponential improvement in space complexity $r \in \Theta\left(\frac{n^3}{4^n}\right)$.
\end{example}

In the next section, we demonstrate the advantage of symmetry PP numerically.

\begin{figure}[t!]
    \centering
    \includegraphics[width=0.9\linewidth]{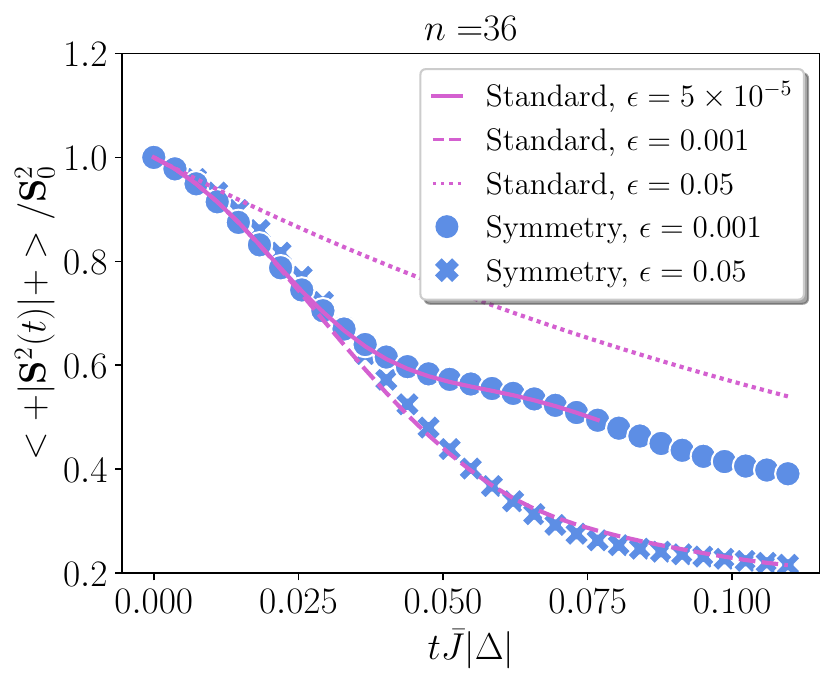}
    \caption{\textit{Spin expectation values of all-to-all Heisenberg dynamics.} We compare predictions from standard PP (lines) and symmetry PP (markers). The total spin operator is defined in \equref{eq:s2} and the normalization factor $\mathbf{S}_0^2 = \expval{\mathbf{S}^2}{\boldsymbol{+}}$ is computed with respect to the initial product state along $x$-axis $\ket{\boldsymbol{+}} := \ket{+}^{\otimes n}$ for $n=36$ qubits with $J_{\perp}=1$ and $\Delta=-1.8$. We also introduce depolarizing noise layers, damping the coefficient by non-identity Pauli weight $e^{-\gamma \abs{P}}$ (e.g. $\abs{XXIZ}=3$). The numerical experiments were performed using up to $100 \text{G}$ memory and up to three days on a single CPU-- a limit that only standard PP with the lowest cutoff $\epsilon$ reached. To achieve the most accurate result, symmetry PP \textit{uses $10$ times fewer Paulis} than standard PP: $\tilde{N}\approx 2 \times 10^8$ Paulis at the end of blue circles for symmetry compared to $N\approx 2 \times 10^7$ Paulis at the end of the pink line for standard PP.}
    \label{fig2:xxz_translation}
\end{figure}

\section{Numerical examples}
\textit{Example: periodic 1$d$ Ising model.} 
The tilted-field Ising model is described by the Hamiltonian:
\begin{equation}
    H_{\text{Ising}} = - \sum_{\langle i j \rangle } Z_i Z_j - h_z \sum_{i=1}^{n} Z_i - h_x \sum_{i=1}^{n} X_i,
\end{equation}
where $\langle i j \rangle$ enumerates the nearest-neighbor sites and $h_x (h_z)$ is the transverse (longitudinal) field. It is known that in the absence of the longitudinal field, the Hamiltonian is integrable using Jordan Wigner transformations~\cite{lieb1961two} such that only a subspace is explored, which is spanned by a polynomial in $n$ number of Pauli operators. Away from the integrable limit ($h_z=0$), this model has been widely used to study thermalization and ergodic dynamics~\cite{hertz2018quantum, roberts2015localized}. In our numerical results, we consider the field strengths $g_x=1.4, g_z=0.9045$ studied in prior works~\cite{rakovszkyDissipationassistedOperatorEvolution2022}. Under periodic boundary conditions, the system has the translation symmetry $\mathbb{Z}_n$.  

The Hamiltonian dynamics is unitary and can be trotterized to $L = \frac{t}{\delta}$ layers as $U = \prod_{l=1}^{L} \prod_{\langle i j \rangle } e^{-i \delta Z_i Z_j} \prod_{i} e^{- i h_z \delta  Z_i} \prod_{i} e^{- i h_x \delta X_i}$. We only merge after a full layer of Hamiltonian trotterization, hence the merging procedure can be further improved if we merge after a minimal symmetric layer, e.g. a single layer of $ZZ$, or $Z$, or $X$ Pauli gates. Consider an initial Pauli operator $Z_{\lceil n/2 \rceil}$ located in the middle of the periodic chain at site $\lceil n/2 \rceil$, the evolution of this operator is given by $U^\dagger Z_{\lceil n/2 \rceil} U = \sum_{\alpha=1}^{N} c_{\alpha} s_{\alpha}$, where $\alpha$ enumerates Pauli strings $s_{\alpha}$ with coefficients $c_{\alpha}$ and $N$ denotes the number of Pauli strings with nonzero coefficients. Using symmetry PP, the merged operator $\widetilde{Z}_{\lceil n/2 \rceil} = \sum_{\alpha=1}^{\tilde{N}} \tilde{c}_{\alpha} \tilde{s}_{\alpha}$ has a reduced number $\tilde{N}$ of Pauli representatives $\tilde{s}_{\alpha} \in \mathcal{R}^G_n$ by merging contributions from symmetry-equivalent Pauli strings in the same orbit $\tilde{c}_{\alpha} = \sum_{j=0}^{n-1} \Tr[\hat{T}^j(\tilde{s}_{\alpha})U^\dagger Z_{\lceil n/2 \rceil} U]$ (see \figref{fig:fig1}\textbf{b}). 

In \figref{fig1:symmetry_Paulis}, we numerically verify that $N > \tilde{N}$ for all circuit layers, therefore leading to a reduced number of Pauli strings throughout the simulations as discussed in \equref{eq:O-merging}. The memory saving increases with the circuit layers as more Paulis are generated. The circuit depth necessary to generate Pauli dynamics exploring the full operator space increases with the number of qubits $n$, as seen by the shifted elbows. The reduction in the number of Paulis needed to be propagated saturates to the theoretical limit in \exref{ex:Zn} at high enough layers. In practice in order for Pauli propagation to be numerically feasible, the number of Paulis generated are typically truncated so that the evolved operator does not span the full operator space. To tackle the dynamics of a more difficult dynamics, we adopt numerical truncations in symmetry PP in the next example.

\textit{Example: all-to-all XXZ Heisenberg model.}
To demonstrate how symmetry PP allows us to simulate physically relevant dynamics, we investigate the dynamics of the all-to-all interacting Heisenberg model with long-range power law interactions decaying with a distance defined on a $2$-dimensional lattice over an exponent $\alpha$. 
This model is a rich playground, as the ground state phase diagrams exhibit a complex structure in entanglement and critical phenomena~\cite{frerot2017entanglement}. Moreover, it is also relevant for quantum simulations and metrology in atom-based platforms such as the Rydberg atom arrays, dipolar molecules and trapped ion arrays~\cite{perlin2020spin, bilitewski2021dynamical, muleady2023validating}. The power-law XXZ Hamiltonian is given by 
\begin{equation}\label{eq:hamXXZ}
    H_{\text{XXZ}} = - J_{\perp}\sum_{ i < j  } \frac{X_i X_j + Y_i Y_j + (\Delta + 1) Z_i Z_j}{|\mathbf{r}_{i} - \mathbf{r}_{j}|^{\alpha}},
\end{equation}
where $J_{\perp}$ is the interaction strength and $\Delta$ is the relative anisotropic coupling strength and $\mathbf{r}_{i}$ is the position of the $i= 1, \cdots, n$-th atom. For $\Delta=0$, the Hamiltonian reduces to the invariant model with $SU(2)$ spin rotational symmetry. In the absence of spin-orbit coupling $\alpha=0$, the Hamiltonian also has rotation and permutation symmetry and the dynamics can continually be simulated classically.
% ~\footnote{One can adopt mean-field theories for this limit. Using symmetry PP, it is also efficient to simulate as the number of orbit representatives are polynomial in $n$ qubits as discussed in \ref{ex:Sn}}.

The unitary is the trotterized Hamiltonian evolution $U = \prod_{l=1}^{L} \prod_{ i <j  } e^{- i h_{ij} \delta}$ where the time step is $\delta = t / L$ and $h_{ij} = - J_{\perp} \left(X_i X_j + Y_i Y_j + (\Delta + 1) Z_i Z_j\right) /|\mathbf{r}_{i} - \mathbf{r}_{j}|^{\alpha}$ is the local Hamiltonian operator. It is possible to merge after only evolving with sublayers to further improve the efficiency. Because pairs of Pauli operators commute with each other e.g. $\commu{X_i X_j}{Y_i Y_j} = 0$, we can further decompose the rotation into Pauli rotations as $e^{i \left(\theta_x X_i X_j + \theta_y Y_i Y_j + \theta_z Z_i Z_j \right)} = e^{i \theta_x X_i X_j} e^{i \theta_y Y_i Y_j} e^{i \theta_z Z_i Z_j}$. Alternative strategies include higher order trotterization with multi product formula~\cite{childs2021theory}, finding a minimum set of grouped operators~\cite{berry2007efficient} and randomized trotterization~\cite{campbell2019random, childs2019faster}. In the present example, we have numerically observed that the trotterization order affects the relevant observable expectation values minimally. 

We will study the total spin operator~\cite{muleady2023validating}
\begin{equation}\label{eq:s2}
    \mathbf{S}^2 = \sum_{\mu = x, y, z} (\sum_{i=1}^{n} \sigma_{\mu, i})^2,
\end{equation}
where $\sigma_{\mu, i}$ are spin-half Pauli matrices (e.g. $\sigma_{x, i} = \frac{1}{2} X_i$).
The measurement operator and unitary evolution are translationally invariant, therefore we apply our merging algorithm with translation symmetries along both $x$ and $y$ directions independently. {While there exists other specialized methods such as discrete truncated Wigner approximation~\cite{wurtz2018cluster}, this dynamics is highly non-trivial for MPS-based simulations~\cite{muleady2023validating} owing to the all-to-all connected circuit topology. In our naive MPO simulations, we observe difficulty in convergence for simulation time $t \bar{J} \abs{\Delta}>0.02$ using a bond dimension up to $400$.}

In \figref{fig2:xxz_translation}, we demonstrate that symmetry PP allows us to simulate the dynamics more accurately than standard PP. To simulate effects of presence of noise, our numerical simulations include a layer of depolarizing noise with strength $\gamma$ after each trotterization step. The noise layer can be viewed as a time discretized model for a local Lindbladian~\cite{breuer2002theory, lidar2019lecture}. In addition, we adopt a coefficient truncation strategy such that only terms with coefficients greater than a truncation threshold $\epsilon$ are kept~\cite{rudolph2025pauli}. For instance, a truncated operator is given by $O^{\text{trunc}} = \sum_{\alpha \,| \,c_\alpha \leq \epsilon} c_\alpha s_\alpha$ for some small $\epsilon$. We study the effect of different truncation thresholds on simulation accuracy and find that symmetry PP achieves faster convergence even for a larger truncation threshold $\epsilon = 10^{-3}$ (circles) compared to a truncation threshold at $\epsilon = 5 \times 10 ^{-5}$ (solid line). In comparison, standard PP diverges at earlier times (dashed vs.~solid lines). Moreover, symmetry PP with a higher truncation threshold $\epsilon=0.05$ agrees with standard PP at a much lower threshold ($\epsilon=0.001$). This can be understood by realizing that symmetry merging tends to increase the magnitudes of the coefficients, in turn increasing the Pauli strings that are kept per threshold. While it is inevitably exponentially costly to simulate long-time dynamics, given a fixed budget of memory and computation time, symmetry PP converges to longer time and leads to more accurate results hence reducing the computational resources for a target accuracy. 

\section{Discussion}
% \emph{Discussion.} 
We have established a group-theoretic symmetry-merging framework for Pauli propagation, called symmetry PP. By propagating only a set of Pauli orbit representatives,  we reduce the memory cost by a factor related to the size of the group. Such a memory saving is highly impactful for future improvements using graphics processing unit (GPU) because the acceleration hardwares typically have more constraints in memory.  Most recently, related works have also taken symmetry into consideration via merging in PP~\cite{loizeauQuantumManybodySimulations2024, haghshenas2025digital}. Theoretically, we provide rigorous guarantees for the correctness and computational complexity. Then we demonstrate the effectiveness of symmetry PP for an ergodic dynamics of $1d$ tilted-field Ising model and an all-to-all Heisenberg dynamics.  We observe an improved stability in symmetry PP simulations, despite subtleties in the memory saving due to an unavoidable exponentially large space of Pauli strings. It would be interesting to analyze the improvement by considering magic monotones in resource theory~\cite{dowling2025magic, turkeshi2025magic}.

For future work, it would be interesting to explore symmetric merging strategies in other propagation based methods such as Majorana propagation for fermionic systems~\cite{miller2025simulation,alam2025fermionic,alam2025programmable,danna2025majorana} or displacement propagation for bosonic system~\cite{upreti2025quantum}.
This includes any future merging strategies for other discrete symmetries.
More ambitiously, developing propagation subroutines 
for continuous symmetries will be highly relevant and may yield substantial reductions in computational cost. An abstract analysis of this direction has already been initiated using representation theory in Ref.~\cite{cirstoiu2024fourier}. Such progress would open new avenues for investigating the role of symmetries in physically relevant quantum Floquet dynamics, such as those previously explored in random circuits with $U(1)$ symmetry~\cite{khemaniOperatorSpreadingEmergence2018, rakovszkyDiffusiveHydrodynamicsOutTimeOrdered2018}.

\vspace{3mm}

\noindent \textit{Note}: While preparing this manuscript we used similar methods with our collaborators in \refcite{haghshenas2025digital}. 
\vspace{2mm}

\noindent \emph{Code and data availability.}
The numerical experiments in this work were conducted with the \href{https://github.com/MSRudolph/PauliPropagation.jl}{PauliPropagation.jl} package, which now also contains our symmetry merging functionality. The numerical data is publicly available at \href{https://doi.org/10.5281/zenodo.17804051}{Zenodo}.

\vspace{2mm}

\noindent \textit{Acknowledgments.} 
The authors acknowledge Nathan Leitao for insightful discussions. 
YT acknowledges support from NCCR spin, a National Centre of Competence in Research, funded by the Swiss National Science Foundation (grant number 565785). SYC was supported by Laboratory Directed Research and Development (LDRD) program of Los Alamos National Laboratory (LANL) under project number 20260043DR and by the U.S. Department of Energy, Office of Science, Office of Advanced Scientific Computing Research under Contract No. DE-AC05-00OR22725 through the Accelerated Research in Quantum Computing Program MACH-Q project. MSR acknowledges funding from the 2024 Google PhD Fellowship and the Swiss National Science Foundation [grant number 200021-219329]. ZH acknowledges support from the Sandoz Family Foundation-Monique de Meuron program for Academic Promotion. We acknowledge support of the NCCR MARVEL, a National Centre of Competence in Research, funded by the Swiss National Science Foundation (grant number 205602).

\let\oldaddcontentsline\addcontentsline% Store \addcontentsline
\renewcommand{\addcontentsline}[3]{}  % disable adding TOC entries
\bibliography{references}
\let\addcontentsline\oldaddcontentsline % Resstore table of contents

% -------------- START OF APPENDICES ------------- % 

\onecolumngrid  % switch to one column
\appendix

\begin{appendix}
\clearpage 

\vspace{2.0em}
\begin{center}
\textbf{\Large Supplementary Materials}
\end{center}

\renewcommand{\appendixname}{}

\renewcommand{\thesubsection}{\MakeUppercase{\alph{section}}.\arabic{subsection}}
\renewcommand{\thesubsubsection}{\MakeUppercase{\alph{section}}.\arabic{subsection}.\alph{subsubsection}}
\makeatletter
\renewcommand{\p@subsection}{}
\renewcommand{\p@subsubsection}{}
\makeatother

\renewcommand{\figurename}{Supplementary Figure}
\setcounter{secnumdepth}{3}
\makeatletter
     \@addtoreset{figure}{section}
\makeatother

\vspace{3mm}

\tableofcontents

\section{Pauli Path Methods}\label{app:pp}
    In this appendix, we review the framework of Pauli path based simulations, adopting the conventions of Pauli propagation~\cite{rudolph2025pauli}.
    Let us consider an approximately maximally mixed state, commonly studied in operator spreading,
    \begin{equation}
        \rho = \frac{\mathbb{I} + a O}{\Tr(\mathbb{I})},
    \end{equation}
    which is perturbed by a small amplitude $a$.
    The evolution of the state is determined by the evolution of the perturbation operator
    \begin{equation}
        O^\prime = U^\dagger O U.
    \end{equation}
    Here, the operators live in the $n$-qubit Hilbert space $\mathcal{H} = (\mathbb{C}^2)^{\otimes n}$, with an inner product $\inner{A}{B} = \Tr[A^\dagger B]$. The operators can be represented by matrices $M_{2^n}(\mathbb{C})$.
    To analyze quantum dynamics, it is useful to vectorize these operators in an enlarged Hilbert space, called the Liouville space $\mathcal{L} = \mathcal{H} \otimes \mathcal{H}^*$~\cite{gyamfi2020fundamentals}. In the vectorization formalism, an operator $A$ on $\mathcal{H}$ can be uniquely mapped to a vector $\kett{A} \in \mathcal{L}$ as the following:
    \begin{equation}
        A = \sum_{ij} A_{ij} \ket{i}\bra{j} \quad  \longrightarrow \quad \kett{A} = \sum_{ij} A_{ij} \ket{i} \otimes \ket{j^*},
    \end{equation}
    where $\{\ket{i}\}$ represents an orthonormal basis in the Hilbert space. $\ket{j^*}$ is the complex conjugation of $\ket{j}$, corresponding to the dual vector $\bra{j}$. In the simple cases of the computational basis, the complex conjugation does nothing and is often abbreviated in vectorization formalism. The unitary acting on the original operators are superoperators in the Liouville space. In particular, its adjoint action is 
    \begin{equation}
        U \otimes U^* \kett{A} = \kett{U A U^\dagger}.
    \end{equation}
    
    In the rest of the sections, it is useful to define the \emph{normalized} Pauli strings
    \begin{equation}
        \mathcal{P}_n = \left\{\frac{\mathbb{I}}{\sqrt{2}}, \frac{X}{\sqrt{2}}, \frac{Y}{\sqrt{2}}, \frac{Z}{\sqrt{2}}\right \}^{\otimes n}.
    \end{equation}
    Consider the basis given by the normalized Pauli operators, $\kett{A}$ can be represented as 
    \begin{equation}
        \kett{A} = \sum_{s \in \mathcal{P}_n} \Tr[s^\dagger A] \, \kett{s}.
    \end{equation}
    In the following, we will often interchangably write $A = \sum_{s \in \mathcal{P}_n} \Tr[s A] \, s$ because $s \in \mathcal{P}_n$ forms a Hermitian basis. 
    The adjoint action of the unitary can then be represented in the Pauli basis as
    \begin{equation}\label{eq:liouville}
         U \otimes U^* = \sum_{ij} (U \otimes U^*)_{ij}\,\kett{s_i}\braa{s_j}, \quad  (U \otimes U^*)_{ij} = \Tr[Us_j U^\dagger s_i].
    \end{equation}
    The Liouville matrix representation $(U \otimes U^*)_{ij}$ in the Pauli basis is often referred to as the \textit{Pauli transfer matrix} (PTM) in Pauli propagation. While we focus on unitary evolutions, our analysis in this formalism can be generalized to quantum channels. The PTM for any quantum channels in the Pauli basis is always real. For instance in the case of a unitary evolution, its complex conjugate is the same in the Pauli basis gives
    \begin{equation}
        (U \otimes U^*)^*_{ij} = \Tr[(U^\dagger)^T (s^\dagger_j)^T U^T (s^\dagger_i)^T] = \Tr[s_i U s_j U^\dagger] = (U \otimes U^*)_{ij},
    \end{equation}
    where we have used that Pauli operators are Hermitian $s_i^\dagger = s_i$ as well as cyclic and invariant properties of trace $\Tr[(AB)^T] = \Tr[AB] = \Tr[BA]$. In in the Heisenberg picture, the PTM is given by the transpose of that in the Schrodinger picture
    \begin{equation}\label{eq:ptm-heisenberg}
        (U^+ \otimes U^T)_{ij} = \Tr[U^{\dagger }s_j U s_i]
    \end{equation}
    
    Suppose the unitary circuit is composed of $L$ layers as $U = U_L U_{L-1} \cdots U_2 U_1$, then the evolved expectation of an operator in the Heisenberg picture is 
    \begin{equation}
        U^{\dagger} O U = U_1^{\dagger} U_2^{\dagger} \cdots U_{L-1}^{\dagger} U_L^{\dagger} (O) U_L U_{L-1} \cdots U_2 U_1.
    \end{equation}
    The Pauli decomposition of the initial observable is
    \begin{equation}
        O_L := O = \sum_{s_L \in \mathcal{P}_n} \Tr[O s_L] s_L.
    \end{equation}
    For layers $l=L, L-1, \cdots, 1$-th layer, a partially evolved observable is given by recursively applying each layer of the unitaries
    \begin{equation}
        O_{l-1} = U_l^{\dagger} O_l U_l
    \end{equation}
    For instance, the last layer of the evolution can be written as a linear combination of applying the PTM from \equref{eq:ptm-heisenberg} to $s_L$ as
    \begin{equation}
        O_{L-1} := U_L^{\dagger} (O_L) U_L =  \sum_{s_L, s_{L-1} \in \mathcal{P}_n} \Tr\left[O s_{L}\right] \Tr \left[ U_L^{\dagger} s_{L} U_L s_{L-1} \right] s_{L-1}.
    \end{equation}
    Expanding such evolution iteratively in the Pauli basis and denoting the final operator as $O_0$ leads to the \emph{Pauli path integral}
    \begin{equation}\label{eq:Pauli_path}
    O_0 := U^{\dagger} O U = \sum_{s_0, s_1, \cdots s_{L-1} s_L \in \mathcal{P}_n} \Tr[s_L O] \prod_{l=1}^{L} \Tr[s_{l-1} U_l^{\dagger} s_{l} U_l] s_0.
    \end{equation}
    In the rest of the appendices, our goal is to improve the computation using symmetry analysis. 

\newpage
\section{Symmetry and Group Theory}\label{app:group}

In this self-contained section, we review the terminologies that we use from group theory, which can be found in standard texts~\cite{armstrong1997groups, rotman2012introduction}. 
\begin{definition}[\textbf{Group} (finite)]
    A finite set $G$ with an operation $*$ forms a \emph{group} $(G, *)$ if the following holds:
    \begin{enumerate}
        \item The identity element is in the group: $ \exists e  \in G, \quad  e*g = g*e = g, \quad \forall g \in G$.
        \item Every element has an inverse: $\forall g \in G, \exists a \in G \implies g*a = a*g = e$.
        \item The operation $*$ is associative: $g * (a * b) = (g * a) * b, \quad \forall g, a, b \in G$.
        % \item Group is closed under operation $f * g \in G, \quad \forall f, g \in G$.
    \end{enumerate}
\end{definition}
In the following, we will usually omit the operation and refer to a group as $G$. Rather than focusing on the abstract definition of a group, it is more useful to think about their \emph{action} on a set. 
\begin{definition}[\textbf{Group action}]
    The action of a group $G$ on a set $X$ is denoted as
    \begin{equation}
        g \cdot (x), \quad \forall g \in G, \, x\in X.
    \end{equation}
    The group action defines an equivalence relation on $X$: $x \sim y$ if $\exists\, g \in G$ such that
    \begin{equation}
        y = g\cdot(x).
    \end{equation}
\end{definition}

The set of elements reached by acting $G$ on an element $x$ is called an \textit{orbit}.
\begin{definition}[\textbf{Group orbit}]\label{def:orbit}
    The orbit of $x\in X$ under the action of group $G$ is 
    \begin{equation}
        \operatorname{Orb}_G(x) := \{g\cdot(x) \, \vert \, g\in G \}.
    \end{equation}
    Each orbit $\operatorname{Orb}_G(x)$ is an \emph{equivalence class} of $x$, and two elements $x, y$ are equivalent under the group action if they are in the same orbit $y \in \operatorname{Orb}_G(x)$.
\end{definition}

Because each orbit is a set containing elements that are reachable by the action of a group, in this sense these elements in an orbit are redundant descriptions of each other. Therefore it is sufficient to only consider a representative for each orbit. 
\begin{definition}[\textbf{Orbit representatives}]\label{def:orbit-repr}
    A complete set of orbit representatives is a set $R^G_X :=\{x_i\} \subseteq X$ such that
    \begin{equation}
        X = \coprod_{x_i \in R^G_X} \operatorname{Orb}_G(x_i),
    \end{equation}
    where $\coprod$ denotes a disjoint union.
\end{definition}
Here the disjoint union emphasizes that different orbits do not intersect or otherwise they are the same orbit. 
Applying a group action on Pauli strings, we define \emph{Pauli representatives} under translation symmetry. 
\begin{example}[\textbf{Pauli representatives ($\mathbb{Z}_n$)}]\label{ex:Pauli-orbits-zn}
    Let $\mathcal{P}_n$ be the set of normalized Pauli strings, and $\mathbb{Z}_n$ be the symmetry group consisting of $n$ group elements $\mathbb{Z}_n = \{\hat{I}, \hat{T}, \cdots, \hat{T}^{n-1} \}$. The group action of $\mathbb{Z}_n$ on Pauli string $s \in \mathcal{P}_n$ is
    \begin{equation}
        \operatorname{Orb}_{\mathbb{Z}_n}(s) = \sum_{j = 0}^{n-1} \hat{T}^j \cdot (s).
    \end{equation}
    Define the action of translation as shifting the system to the left such that $\hat{T}(IZI\cdots I) := ZII\cdots I$ (up to normalization). Let $s_0 = 1/\sqrt{2}^n (ZII\cdots I)$, then its orbit consists of all the Pauli strings under translation 
    \begin{equation}
        \operatorname{Orb}_{\mathbb{Z}_n}(s_0) = \left\{\frac{ZII\cdots I}{\sqrt{2}^n},  \frac{IZI\cdots I}{\sqrt{2}^n},  \cdots, \frac{III\cdots Z}{\sqrt{2}^n} \right\}.
    \end{equation}
    A complete set of orbits partition the Pauli strings into a disjoint set. For instance, for $n=3$ qubits, the set of unique orbits representatives (up to normalization) are
    \begin{align}
        \{&III, XII, YII, ZII, XXI, XYI, XZI, YXI, YYI, YZI, ZXI, ZYI, ZZI, XXX, XXY, XXZ, \nonumber \\
        & YYY, YYX, YYZ, ZZX, ZZY, ZZZ, XYZ, XZY\}.
    \end{align}
    In total, there are $24$ representatives and we show how to calculate its cardinality in \exref{ex:Zn_app}.
\end{example}
Next, we formalize how to count the size of the orbits of any groups using \lemref{lem:burnside}.
\begin{lemma}[\textbf{Burnside's lemma}~\cite{armstrong1997groups}]\label{lem:burnside}
    Let $G$ be a finite group, and $R^{G}_X$ be a complete set of orbit representatives of a finite set $X$. Then the cardinality of the representatives is given by
    \begin{equation}
        \abs{R^{G}_X} = \frac{1}{\abs{G}} \sum_{g\in G} \abs{X^g},
    \end{equation}
    where $X^g$ denotes the set of elements that are invariant under left action of $g$
    \begin{equation}
        X^g = \{x \in X \, \vert \, g \cdot(x) = x\}.
    \end{equation}
\end{lemma}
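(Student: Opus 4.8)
The plan is to prove this by the standard double-counting argument applied to the incidence set of fixed pairs. I would define
\begin{equation}
    S := \{(g, x) \in G \times X \,\vert\, g \cdot (x) = x\},
\end{equation}
and compute $\abs{S}$ in two independent ways. Summing first over the group elements, each $g$ contributes exactly the number of points it fixes, so $\abs{S} = \sum_{g \in G} \abs{X^g}$, which is precisely the quantity appearing on the right-hand side, up to the factor $\abs{G}$. The goal is therefore to show that the second way of counting, organized over the points of $X$, yields $\abs{G}\cdot\abs{R^G_X}$, after which equating the two counts and dividing by $\abs{G}$ gives the claim.

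Summing over $x \in X$ first, I would write $\abs{S} = \sum_{x \in X} \abs{G_x}$, where $G_x := \{g \in G \,\vert\, g\cdot(x) = x\}$ is the stabilizer of $x$. The key intermediate fact I will need is the orbit-stabilizer relation $\abs{G_x}\cdot\abs{\operatorname{Orb}_G(x)} = \abs{G}$, which I establish by checking that the map sending a left coset $gG_x$ to the point $g\cdot(x)$ is a well-defined bijection between the left cosets of $G_x$ and the orbit $\operatorname{Orb}_G(x)$. Using this to substitute $\abs{G_x} = \abs{G}/\abs{\operatorname{Orb}_G(x)}$, the sum becomes $\abs{S} = \abs{G}\sum_{x \in X} 1/\abs{\operatorname{Orb}_G(x)}$.

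The final step regroups this sum according to the orbit decomposition $X = \coprod_{x_i \in R^G_X} \operatorname{Orb}_G(x_i)$ guaranteed by Definition~\ref{def:orbit-repr}. Within a single orbit every point shares the same orbit size, so the contribution of each orbit to the sum $\sum_x 1/\abs{\operatorname{Orb}_G(x)}$ is exactly $1$; since the decomposition has one orbit per representative, the total is $\abs{R^G_X}$. Hence $\abs{S} = \abs{G}\cdot\abs{R^G_X}$, and equating the two counts gives $\sum_{g\in G}\abs{X^g} = \abs{G}\cdot\abs{R^G_X}$, which rearranges to the stated identity.

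I expect the main obstacle to be the orbit-stabilizer relation, which is the one nontrivial ingredient not already set up by the preceding definitions. Care is needed to verify that $gG_x \mapsto g\cdot(x)$ is both well-defined (independent of the chosen coset representative) and injective; both reduce to the equivalence $g\cdot(x) = h\cdot(x) \Leftrightarrow h^{-1}g \in G_x$, which follows directly from the group axioms and the definition of the action. Everything else is routine bookkeeping on the orbit partition.
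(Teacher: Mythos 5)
Your proposal is correct and follows essentially the same route as the paper's proof: double-counting the incidence set of fixed pairs $\{(g,x) \,\vert\, g\cdot(x)=x\}$, invoking the orbit--stabilizer relation to rewrite $\sum_{x}\abs{\operatorname{Stab}_G(x)}$ as $\abs{G}\sum_x 1/\abs{\operatorname{Orb}_G(x)}$, and regrouping over the orbit decomposition so each orbit contributes exactly $1$. The only difference is one of detail: the paper cites the orbit--stabilizer theorem as known, whereas you sketch its proof via the coset bijection $gG_x \mapsto g\cdot(x)$, which is a reasonable strengthening but not a different argument.
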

\begin{proof}
    Consider the sum in the r.h.s.
    \begin{equation}
        \sum_{g\in G} \abs{X^g} = \abs{\{ (x, g) \, \vert \,  x \in X, g\in G \, , \, g \cdot(x) = x\}}.
    \end{equation}
    The sum can be rewritten as
    \begin{equation}\label{eq:fixed_point_stab}
         \sum_{g\in G} \abs{X^g} = \sum_{x \in X} \abs{\operatorname{Stab}_G(x)},
    \end{equation}
    where $\operatorname{Stab}_G(x) = \{g \,|\, g\cdot(x) = x, \, g\in G\}$ is the stabilizer of $x$. Consider an element $x\in X$, the orbit-stabilizer theorem counts how much $x$ is stabilized and moved in orbits  
    \begin{equation}
        \abs{G} = \abs{\operatorname{Orb}_G(x)} \times \abs{\operatorname{Stab}_G(x)},
    \end{equation}
    leading to 
    \begin{equation}
         \sum_{x \in X} \abs{\operatorname{Stab}_G(x)} = \sum_{x \in X} \frac{\abs{G}}{\abs{\operatorname{Orb}_G(x)}}.
    \end{equation}    
    By definition of orbit representatives, $X$ is the disjoint union of orbits so the sum is
    \begin{align}
        \sum_{x \in X} \abs{\operatorname{Stab}_G(x)} &= \sum_{r \in R^{G}_X}  \sum_{x \in \operatorname{Orb}_G(r)} \frac{\abs{G}}{\abs{\operatorname{Orb}_G(x)}}, \\
        & =\abs{G} \abs{R^{G}_X}.
    \end{align}
    Therefore, combing with \equref{eq:fixed_point_stab}, we have arrived at the result
    \begin{equation}
        \sum_{g\in G} \abs{X^g} = \abs{G} \abs{R^{G}_X}.
    \end{equation}
\end{proof}

Equipped with \lemref{lem:burnside}, we can then count the representatives under group actions for Pauli strings. 
\begin{corollary}[\textbf{Pauli representatives cardinality}~\cite{armstrong1997groups}]\label{cor:burnside_Pauli}
    Let $G$ be a finite group, and $\mathcal{R}^{G}_n$ be a complete set of orbit representatives of the normalized $n$-qubit Pauli strings $\mathcal{P}_n$. Then the cardinality of the representatives is given by
    \begin{equation}
        \abs{\mathcal{R}^{G}_n} = \frac{1}{\abs{G}} \sum_{g\in G} \abs{\mathcal{P}_n^g},
    \end{equation}
    where $\mathcal{P}_n^g$ denotes the set of Paulis that are invariant under left action of $g$
    \begin{equation}
        \mathcal{P}_n^g = \{s \in \mathcal{P}_n \, \vert \, g \cdot(s) = s\}.
    \end{equation}
\end{corollary}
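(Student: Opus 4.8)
The plan is to recognize that this corollary is nothing more than Burnside's lemma (\lemref{lem:burnside}) applied to the particular $G$-set $X = \mathcal{P}_n$. Since all the abstract counting machinery---the orbit-stabilizer theorem and the double counting of fixed-point pairs---has already been established in the proof of \lemref{lem:burnside}, the only work that remains is to verify that $\mathcal{P}_n$ genuinely carries a $G$-action to which the lemma applies, and then to instantiate the symbols.

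First I would check the two hypotheses of \lemref{lem:burnside}. The set $\mathcal{P}_n$ is finite, with $\abs{\mathcal{P}_n} = 4^n$, so finiteness is immediate. The more substantive point is that the group action inherited from the conjugation $g\cdot(s) = A_g s A_g^{-1}$ must map $\mathcal{P}_n$ into itself, so that $(\mathcal{P}_n, G)$ is a legitimate $G$-set. For the symmetries of interest---translations $\mathbb{Z}_n$ and permutations $S_n$---the representation $A_g$ is a permutation of the qubit tensor factors, and conjugation by such an operator simply relabels the single-qubit Pauli factors of a string without introducing any phase, hence sends a normalized Pauli string to another normalized Pauli string. I would emphasize that it is exactly this phase-free property that keeps us inside $\mathcal{P}_n$ rather than the larger signed Pauli group, and that it holds for any representation $A_g$ lying in the Pauli-preserving (Clifford-type) subgroup compatible with the definition of $\mathcal{P}_n$.

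With the $G$-action confirmed, I would apply \lemref{lem:burnside} verbatim, identifying $X = \mathcal{P}_n$, the complete set of orbit representatives $R^G_X = \mathcal{R}^G_n$, and the fixed-point sets $X^g = \mathcal{P}_n^g = \{s \in \mathcal{P}_n \,\vert\, g\cdot(s) = s\}$. Substituting these into the statement of the lemma yields
\begin{equation}
    \abs{\mathcal{R}^{G}_n} = \frac{1}{\abs{G}}\sum_{g \in G} \abs{\mathcal{P}_n^g},
\end{equation}
which is precisely the claim.

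The proof carries essentially no obstacle beyond the bookkeeping check above; the only place one could stumble is in taking the $G$-action on $\mathcal{P}_n$ for granted. The main work is therefore conceptual rather than computational: confirming that the symmetry representation really permutes the Pauli basis. Once that is in hand, the formula follows by direct substitution, and the genuine effort is deferred to the subsequent examples, where one must actually enumerate the fixed-point counts $\abs{\mathcal{P}_n^g}$ for concrete groups such as $\mathbb{Z}_n$ and $S_n$ to obtain the explicit cardinalities.
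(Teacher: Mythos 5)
Your proposal is correct and matches the paper's treatment exactly: the paper gives no separate proof of the corollary, presenting it as an immediate instantiation of Burnside's lemma (\lemref{lem:burnside}) with $X = \mathcal{P}_n$, $R^G_X = \mathcal{R}^G_n$, and $X^g = \mathcal{P}_n^g$, which is precisely your argument. Your extra check that the symmetry action genuinely permutes the normalized Pauli strings (phase-free, hence stays within $\mathcal{P}_n$) is a sensible bookkeeping addition that the paper leaves implicit.
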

In the following examples, we reserve the notation $\mathcal{R}^G_n$ for the $n$-qubit \emph{Pauli representatives} under a group $G$. 

\begin{example}[\textbf{Translation symmetry} $\mathbb{Z}_n$]\label{ex:Zn_app}
 Consider the translational symmetry group $G=\mathbb{Z}_n$ on $n$ qubits. The cardinality of the group is $\abs{\mathbb{Z}_n} = n.$ According to \corref{cor:burnside_Pauli}, the number of orbit representatives is given by computing the size of fixed Paulis under the group action
 \begin{equation}
     \abs{\mathcal{R}^{\mathbb{Z}_n}_n}
 = \frac{1}{\abs{\mathbb{Z}_n}} \sum_{j = 0}^{n-1}   \abs{\mathcal{P}_n^{\hat{T}^j}}.
 \end{equation}
 Mathematically, this is a classical combinatorics problem of counting \emph{$p$-ary necklaces} with $n$ beads, where each bead can be one of $p$ colors (for counting Pauli strings $p=4$ corresponding to $\{I, X, Y, Z\}$).
 The number of distinct \emph{necklace} combinations is given by
\begin{equation}
\abs{\mathcal{R}^{\mathbb{Z}_n}_n}
 = \frac{1}{n} \sum_{j = 1}^{n} 4^{\operatorname{gcd}(j, n)},
\end{equation} 
where $\operatorname{gcd}(a,b)$ represents the greatest common divisor between two integers $a$ and $b$. Here, $j=n$ corresponds to applying the identity element and is equivalent to counting the size of the Pauli strings $\abs{\mathcal{P}_n} = 4^n$. For $1 \leq j<n$, a length-$n$ Pauli string invariant under shift of length $j$ has periodicity of $\operatorname{gcd}(j, n)$. Moreover, there are $4^{\operatorname{gcd}(j, n)}$ number of such strings given by the first periodic subset and the rest of the necklace is fully fixed. 
In particular, for $n = 2^m$ , this can be simplified into 

\begin{align} 
  \abs{\mathcal{R}^{\mathbb{Z}_n}_n} 
  & = \frac{1}{n} \sum_{j = 1}^{n} 4^{\textrm{gcd}(j, n)}, \nonumber \\ 
  &=  \frac{1}{n}  \left(4\cdot\frac{n}{2} + 4^2 \cdot \frac{n}{4} + 4^4 \cdot \frac{n}{8} + \cdots +  4^{n/2} + 4^n \right), \nonumber\\ 
                & = \frac{1}{n}  \left( \sum_{k=1}^m 4^{2^{k-1}}\cdot 2^{m-k} + 4^n \right) \le 2 \cdot \frac{4^n}{n}\;, 
\end{align}
as the number of indices $j$ with $\textrm{gcd}(j, n) = 2^{k-1}$ for $k = 1,\cdots, m $ is $\frac{n}{2^k}$. 

On the other hand, if $n$ is a prime number, then, 
\begin{equation}
  \abs{\mathcal{R}^{\mathbb{Z}_n}_n} = \frac{1}{n}(4^n + 4(n-1))).
\end{equation}
Therefore by \thmref{thm:space-complexity}, we have a linear saving ratio $r \in \Theta(\frac{\abs{\mathcal{R}^{\mathbb{Z}_n}_n}}{4^n})\sim\Theta(\frac{1}{n})$ using symmetry Pauli propagation. 
\end{example}
        
\begin{example}[\textbf{Permutation symmetry} $S_n$]
 Consider the permutation symmetry group $S_n$. The number of orbit representatives can be found using the stars and bars formula, leading to: 
\begin{equation}
 \abs{\mathcal{R}^{S_n}_n} = \binom{n+3}{3} : = \textrm{Te}_{n+1} \;. 
\end{equation} 
This is also called the tetrahedral number $\textrm{Te}_{n} = \frac{n (n+1) (n+2)}{6}$~\cite{schatzki2024theoretical}.
For large $n$, the number of orbit representatives scales as $\Theta(n^3)$, leading to $r \in \Theta\left(\frac{n^3}{4^n}\right)$  improvement in terms of space complexity.
\end{example}

\begin{example}[\textbf{Dihedral group} $D_n$] The \emph{Dihedral group} $D_n$ corresponds to a group of the reflectional and rotational symmetry of an regular $n$-sided polygon.  By definition, $D_n$ consists of $2n$ elements: $n$ rotational symmetries and $n$ reflectional symmetries.
Therefore, we have a linear time saving $r \in \Theta(\frac{1}{2n})$ using symmetric merging. This can be viewed as a natural extension of translational symmetry with periodic boundary condition (c.f. Example~\ref{ex:Zn}), where rotational operations on the vertices of the polygon are interpreted as cyclic translations along its perimeter.

\end{example}

\newpage
\section{Symmetry-merging Pauli Propagation}\label{app:symmetry-merging}
In this section, we prove the correctness of symmetry Pauli propagation under exact evolution. First, we turn the intuition that evolving a Pauli string is the same as evolving its orbit representative into a rigorous statement via \lemref{lem:expectation-equib}.
\begin{lemma}[\textbf{Expectation equivalence for orbit representatives}]\label{lem:expectation-equib}
    Let $G$ be the symmetry group and $U$ be a symmetric unitary. Suppose a function over Pauli strings $f: \mathcal{P}_n \rightarrow \mathbb{R}$ is also invariant under $G$ such that $f(g\cdot (s)) = f(s), \, \forall g \in G$. Then the expectation value after propagating a Pauli string $s_1 \in \mathcal{P}_n$ is equivalent to that from propagating any representative in the same orbit $g\cdot (s_1), \, g \in G$. More precisely the following holds:
    \begin{equation}
        \sum_{s_0 \in  \mathcal{P}_n}\Tr[U  g\cdot(s_1) \,U^\dagger s_0] f(s_0) = \sum_{s_0 \in  \mathcal{P}_n} \Tr[U s_1 U^\dagger s_0] f(s_0), \quad \forall g \in G.
    \end{equation}
\end{lemma}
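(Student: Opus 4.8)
The plan is to reduce the claim to the two structural inputs of the setup: the symmetry of the unitary, $\commu{A_g}{U}=0$, and the $G$-invariance of $f$. Throughout I write the group action as conjugation by the (unitary) representation, $g\cdot(s)=A_g\,s\,A_g^{-1}$, and I use that for the symmetries at hand (translation, permutation) this action permutes the finite set $\mathcal{P}_n$ bijectively, so summations over $\mathcal{P}_n$ may be freely re-indexed by $s_0\mapsto g^{-1}\cdot(s_0)$.

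First I would push the group element through the conjugation by $U$. Since $A_g$ commutes with $U$ and is unitary (so $A_g^{-1}=A_g^\dagger$ commutes with $U^\dagger$), one obtains
\[
U\,(g\cdot s_1)\,U^\dagger \;=\; A_g\,(U s_1 U^\dagger)\,A_g^{-1} \;=\; g\cdot(U s_1 U^\dagger).
\]
In words: evolving the shifted string and then shifting the evolved string yield the same operator, because the symmetry commutes with the Heisenberg evolution. Next I would move the residual group action off the evolved operator and onto the test string $s_0$ using the cyclic invariance of the trace,
\[
\Tr[\,g\cdot(U s_1 U^\dagger)\,s_0\,] \;=\; \Tr[\,A_g (U s_1 U^\dagger) A_g^{-1} s_0\,] \;=\; \Tr[\,(U s_1 U^\dagger)\,g^{-1}\cdot(s_0)\,],
\]
so the left-hand sum becomes $\sum_{s_0\in\mathcal{P}_n}\Tr[(U s_1 U^\dagger)\,g^{-1}\cdot(s_0)]\,f(s_0)$.

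Finally I would re-index. Setting $s_0'=g^{-1}\cdot(s_0)$ and using that $g\cdot(\cdot)$ is a bijection of $\mathcal{P}_n$, the sum runs over the same index set, while the weight transforms as $f(s_0)=f(g\cdot s_0')=f(s_0')$ by the assumed invariance of $f$. This collapses the expression to $\sum_{s_0'\in\mathcal{P}_n}\Tr[(U s_1 U^\dagger)\,s_0']\,f(s_0')$, which is exactly the right-hand side, and since no special property of $g$ was used the identity holds for every $g\in G$.

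The only genuinely delicate points — and where I would be most careful — are bookkeeping ones: tracking whether a $g$ or a $g^{-1}$ appears after each manipulation (immaterial to the final statement, which is asserted for all $g\in G$, but easy to mishandle mid-proof), and justifying the re-indexing, which relies on $g\cdot(\cdot)$ being a genuine permutation of the finite set $\mathcal{P}_n$ together with $A_g$ being unitary, so that conjugation by $A_g^{-1}$ coincides with conjugation by $A_g^\dagger$.
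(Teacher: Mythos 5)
Your proof is correct, and its engine is the same as the paper's: the paper's opening chain of equalities, \equref{eq:equiv-exp}, is precisely your combination of $U\,(g\cdot s_1)\,U^\dagger = g\cdot(U s_1 U^\dagger)$ (from $\commu{A_g}{U}=0$) with trace cyclicity, trading the action of $g$ on $s_1$ for the action of $g^{-1}$ on $s_0$. The two arguments part ways only in how the sum over $s_0$ is then handled. The paper expands $\sum_{s_0\in\mathcal{P}_n}$ into a double sum $\sum_{g_0\in G}\sum_{s_0\in\mathcal{R}^G_n}$ over orbit representatives and group elements (\equref{eq:Pauli-orbits}), applies the invariance of $f$, relabels the group element, and collapses the double sum back into a sum over $\mathcal{P}_n$. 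You instead re-index the single sum by the bijection $s_0\mapsto g^{-1}\cdot(s_0)$ of $\mathcal{P}_n$ and invoke the invariance of $f$ once. Your route is not only shorter; it sidesteps a genuine bookkeeping defect in the paper's version: whenever an orbit has a nontrivial stabilizer (e.g.\ the identity string, or any Pauli string fixed by some nonidentity translation), the double sum over $G\times\mathcal{R}^G_n$ counts each Pauli string $\abs{\operatorname{Stab}_G(s_0)}$ times, so the paper's decomposition and recombination steps hold only up to orbit-dependent multiplicity factors rather than as identities. Because those factors are constant on orbits (hence $G$-invariant), the paper's argument is repairable — for instance by absorbing $1/\abs{\operatorname{Stab}_G(s_0)}$ into $f$ — but your change of variables needs no such repair. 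The hypotheses you rely on, namely that $g\cdot(\,)$ genuinely permutes the finite set $\mathcal{P}_n$ and that $A_g$ is unitary so that $A_g^{-1}=A_g^\dagger$ commutes with $U^\dagger$, are exactly what the lemma's orbit structure already presupposes, and you are right to flag them: for a general representation, conjugation by $A_g$ need not map Pauli strings to Pauli strings, so this closure property is an implicit assumption of the lemma in both proofs.
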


\begin{proof}
Let $A_g$ be the representation of the the symmetry group so that $\commu{A_g}{U} = 0$. The action of $g$ on $s_1$ can be written as its inverse action on $s_0$ as
    \begin{align}\label{eq:equiv-exp}
        \Tr[U  g\cdot(s_1) \,U^\dagger s_0] &= \Tr[U  A_g s_1 \nonumber A_g^{(-1)} U^\dagger s_0], \\
        & = \Tr[U  s_1 U^\dagger A_g^{(-1)} s_0 A_g ], \nonumber \\
        & = \Tr[U  s_1 \,U^\dagger g^{(-1)}\cdot(s_0) ].
    \end{align}
Substituting the summation over the Pauli group with its orbit representatives, the expectation value we want to evaluate can be rewritten as
\begin{equation}\label{eq:Pauli-orbits}
\sum_{s_0 \in  \mathcal{P}_n}\Tr[U  g\cdot(s_1) \,U^\dagger s_0] f(s_0) = \sum_{g_0 \in G} \sum_{s_0 \in  \mathcal{R}^{G}_n} \Tr[U s_1 U^\dagger g_0 \cdot(s_0)] f(g_0 \cdot(s_0)), \quad \forall g \in G.
\end{equation}
Using \equref{eq:equiv-exp} and \equref{eq:Pauli-orbits}, we have
\begin{equation}
    \sum_{s_0 \in  \mathcal{P}_n}\Tr[U  g\cdot(s_1) \,U^\dagger s_0] f(s_0)  = \sum_{g_0 \in G} \sum_{s_0 \in  \mathcal{R}^{G}_n} \Tr[U  s_1 \,U^\dagger g^{(-1)}g_0 \cdot(s_0)] f(s_0), 
\end{equation}
where we have also used that $f$ is invariant under the symmetry transformation. Therefore, we can relable the sum over the group element by $\tilde{g}_0 = g g_0$ so that
\begin{equation}
    \sum_{g_0 \in G} \sum_{s_0 \in  \mathcal{R}^{G}_n} \Tr[U  s_1 \,U^\dagger g^{(-1)}g_0 \cdot(s_0)] f(s_0) = \sum_{\tilde{g}_0 \in G} \sum_{s_0 \in  \mathcal{R}^{G}_n} \Tr[U  s_1 \,U^\dagger \tilde{g}_0 \cdot(s_0)] f(s_0).
\end{equation}
Using the fact that $\mathcal{R}^{G}_n$ is a complete set of orbit representatives, the r.h.s becomes
\begin{equation}
    \sum_{\tilde{g}_0 \in G} \sum_{s_0 \in  \mathcal{R}^{G}_n} \Tr[U  s_1 \,U^\dagger \tilde{g}_0 \cdot(s_0)] f(s_0) = \sum_{s_0 \in  \mathcal{P}_n} \Tr[U s_1 U^\dagger s_0] f(s_0).
\end{equation}
% This completes the proof.
\end{proof}

Using \lemref{lem:expectation-equib}, we provide a proof of \thmref{thm:symmetry} by induction. 
\setcounter{theorem}{0}
\begin{theorem}[\textbf{Pauli Symmetric Merging}]
    Let $\rho$ be the initial state and $U = \prod_{l=1}^L U_l$ be an $L$-layer unitary. Suppose they have symmetry $G$ with representations $A_g$ such that 
    \[
    \commu{A_g}{\rho} = \commu{A_g}{U_l}=0, \quad \forall g \in G.
    \]
    Let $\Tr[\rho\, U^\dagger O U]$ be the target expectation value of an observable $O$. Then the output value given by the symmetry-merging propagation algorithm via Algorithm~\hyperref[alg:symmetry]{1} 
    \begin{equation}
        \widetilde{O}_0 = \sum_{s_0, \dots, s_{L} \in \mathcal{R}^{G}_n} \sum_{g_0, \dots, g_{L} \in G} \, \prod_{l=1}^{L} \Tr[U_l^\dagger s_l U_l g_{l} \cdot (s_{l-1})] \Tr[\rho s_0],
    \end{equation}
    equals the target value:
    \begin{equation}
        \Tr[\rho \, \widetilde{O}_0] = \Tr[\rho\, U^\dagger O U].
    \end{equation}
\end{theorem}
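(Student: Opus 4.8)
The plan is to read Algorithm~\hyperref[alg:symmetry]{1} backwards from the final measurement and exhibit $\Tr[\rho\,\widetilde{O}_0]$ as a telescoping sum, so that each merge step is shown to be transparent to the symmetric observable $\Tr[\rho\,\cdot\,]$. To this end I would introduce the partial evolutions $V_l := U_l U_{l-1}\cdots U_1$ (with $V_0 = \mathbb{I}$) and the linear functional $\Phi_l(M) := \Tr[\rho\, V_l^\dagger M V_l]$, which encodes ``propagate $M$ through the remaining layers $l,\dots,1$ and contract with $\rho$.'' The target is $\Phi_L(O) = \Tr[\rho\,U^\dagger O U]$, while the quantity returned by the algorithm is $\Phi_0(\widetilde{O}_0) = \Tr[\rho\,\widetilde{O}_0]$. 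The whole proof then reduces to two identities that let me peel off one layer and one merge at a time.

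First I would establish the purely definitional unwinding $\Phi_{l-1}(U_l^\dagger M U_l) = \Phi_l(M)$, which follows immediately from $V_{l-1}^\dagger U_l^\dagger = V_l^\dagger$ and the cyclicity of the trace; this requires no symmetry. The substantive ingredient is the \emph{merge-invariance} identity
\begin{equation}
  \Phi_l\big(\,\mathrm{Merge}(M)\,\big) = \Phi_l(M), \qquad \mathrm{Merge}(M) := \sum_{s\in\mathcal{R}^G_n}\sum_{g\in G}\Tr[M\, g\cdot(s)]\,s ,
\end{equation}
which is the operator-level counterpart of \lemref{lem:expectation-equib}. Its proof rests on showing that $\Phi_l$ is $G$-invariant, i.e. $\Phi_l(g\cdot(M)) = \Phi_l(M)$: using $\commu{A_g}{U_k}=0$ for every $k\le l$ one gets $\commu{A_g}{V_l}=0$, and together with $\commu{A_g}{\rho}=0$ and cyclicity this moves $A_g$ past $V_l^\dagger$, $V_l$ and $\rho$ and cancels it. Given this invariance, $\Phi_l$ is constant on each orbit $\operatorname{Orb}_G(s)$, so expanding $M$ in the Pauli basis and regrouping the sum over $\mathcal{P}_n$ according to the disjoint orbit decomposition $\mathcal{P}_n = \coprod_{s\in\mathcal{R}^G_n}\operatorname{Orb}_G(s)$ (Definition~\ref{def:orbit-repr}) should collapse $\mathrm{Merge}(M)$ back to $M$ inside $\Phi_l$.

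With both identities in hand, the conclusion follows by a downward induction on the layer index. Writing $\widetilde{O}_{l-1} = \mathrm{Merge}(U_l^\dagger \widetilde{O}_l U_l)$, I would chain
\begin{equation}
  \Phi_0(\widetilde{O}_0) = \Phi_0\big(U_1^\dagger\widetilde{O}_1 U_1\big) = \Phi_1(\widetilde{O}_1) = \cdots = \Phi_L(\widetilde{O}_L) = \Phi_L\big(\mathrm{Merge}(O)\big) = \Phi_L(O),
\end{equation}
where each ``$=$'' alternately uses merge-invariance and the definitional unwinding, and the base case is the initial merge $\widetilde{O}_L = \mathrm{Merge}(O)$. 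Equivalently, one can unroll the recursion directly to obtain the explicit multi-sum for $\widetilde{O}_0$ quoted in the statement and match it term by term against the Pauli path integral of \appref{app:pp} via \lemref{lem:expectation-equib}. I expect the main obstacle to be the bookkeeping inside merge-invariance: one must verify that the group sum $\sum_{g\in G}\Tr[M\,g\cdot(s)]$ reconstructs the \emph{total} orbit coefficient $\sum_{s'\in\operatorname{Orb}_G(s)}\Tr[M\,s']$ with the correct multiplicity, which is where the interplay between the stabilizers $\operatorname{Stab}_G(s)$ and the completeness of $\mathcal{R}^G_n$ must be handled carefully (cf.\ the relabeling $\tilde g_0 = g g_0$ in the proof of \lemref{lem:expectation-equib}); keeping these multiplicities consistent across every layer is the delicate part of the argument.
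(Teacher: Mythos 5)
Your proposal rests on the same core mechanism as the paper's proof: the contraction of an operator against $\rho$ through the remaining symmetric layers is a $G$-invariant linear functional, and merging only redistributes Pauli coefficients within orbits, so it is invisible to any such functional. The paper implements this as an induction over layers on the Pauli path integral (\appref{app:symmetry-merging}): its function $f(s_{m-1})$ is precisely your $\Phi_{m-1}$ restricted to Pauli strings, its \lemref{lem:expectation-equib} is your merge-invariance identity written in path-integral form, and the $G$-invariance of $f$ is obtained by appeal to the inductive hypothesis rather than proven outright. Your organization is genuinely different in two ways that make it cleaner: (i) you prove $G$-invariance of $\Phi_l$ directly from $\commu{A_g}{U_k}=0$ and $\commu{A_g}{\rho}=0$ via cyclicity, so no inductive bootstrapping of invariance is needed and the induction collapses into a transparent telescoping of your two identities; and (ii) you work at the operator level, avoiding the nested multi-sum bookkeeping. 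What the paper's route buys in exchange is the explicit multi-sum expression for $\widetilde{O}_0$ quoted in the theorem statement, which mirrors the data the algorithm actually stores.

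The obstacle you flag at the end --- whether $\sum_{g\in G}\Tr[M\, g\cdot(s)]$ equals the orbit sum $\sum_{s'\in\operatorname{Orb}_G(s)}\Tr[M\, s']$ --- is a real discrepancy, not mere bookkeeping: by orbit--stabilizer, the group sum equals $\abs{\operatorname{Stab}_G(s)}$ times the orbit sum, and stabilizers are nontrivial for Pauli strings such as the identity or $Z^{\otimes n}$ under $\mathbb{Z}_n$. With the literal double-sum definition of the merge in Algorithm~\hyperref[alg:symmetry]{1}, merge-invariance therefore fails on those orbits, and the spurious factors compound across layers. Note, however, that this is not a defect of your argument relative to the paper's: the paper's proof of \lemref{lem:expectation-equib} makes exactly the same silent substitution $\sum_{s_0\in\mathcal{P}_n} \to \sum_{g_0\in G}\sum_{s_0\in\mathcal{R}^{G}_n}$, which carries the identical overcounting. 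The fix, for both arguments, is to interpret the group sum as running over distinct orbit elements (equivalently, over coset representatives of $G/\operatorname{Stab}_G(s)$, or to divide each orbit coefficient by $\abs{\operatorname{Stab}_G(s)}$); with that convention your merge-invariance identity is exactly true, your telescoping chain $\Phi_0(\widetilde{O}_0)=\Phi_1(\widetilde{O}_1)=\cdots=\Phi_L(\widetilde{O}_L)=\Phi_L(O)$ closes the proof, and the result is arguably tighter than the paper's version, whose invariance step is only asserted from the inductive hypothesis.
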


\begin{proof}
We prove the theorem by \textit{induction} on the number of layers $l$. For conceptual simplicity of the proof, we consider the merging process from the \textit{beginning} layers of the unitary circuits, even though in Heisenberg evolution we perform propagation from the final layers.

\textbf{Base Case:} (\( m = 2 \))  
Consider the expectation value after evolution of $L$ layers of the unitary:
\begin{equation}
     \Tr[\rho\, U^\dagger O U] = \Tr\left[\rho \left( \prod_{l=1}^{L} U_l^\dagger \right) O \left( \prod_{l=1}^{L} U_l \right) \right].
\end{equation}
Expanding in the Pauli basis leads to the conventional Pauli path integral
\begin{equation}
    \sum_{s_0, \dots, s_L \in \mathcal{P}_n} \Tr[O s_L] \prod_{l=1}^{L} \Tr[U_l^\dagger s_l U_l s_{l-1}] \Tr[\rho s_0].
\end{equation}
To consider symmetric merging, we rewrite the beginning two layers $l=1, 2$ of the unitary evolution in terms of orbit representatives and their symmetry group as
\begin{equation}
    \sum_{s_2, \dots, s_L \in \mathcal{P}_n} \Tr[O s_L] \prod_{l=3}^{L} \Tr[U_l^\dagger s_l U_l s_{l-1}] \sum_{s_1, s_0 \in \mathcal{R}^{G}_n} \sum_{g_1, g_0 \in G} \Tr[U_2^\dagger s_2 U_2 g_1\cdot(s_{1})] \Tr[U_1^\dagger g_1 \cdot(s_1) U_1 g_0\cdot(s_{0})] \Tr[\rho \, g_0\cdot(s_{0})].
\end{equation}
Because the initial state is symmetric under $G$, the Pauli strings in the same orbit contribute equally to the expectation value:
\begin{equation}
	\Tr[\rho \, g\cdot(s)] = \Tr[\rho s], \quad  \forall g \in G.
\end{equation}

Let $f(s_0) =  \Tr[\rho s_0]$, by \lemref{lem:expectation-equib} we can simplify the last two terms as:

\begin{equation}
    \sum_{s_2, \dots, s_L \in \mathcal{P}_n} \Tr[O s_L] \prod_{l=3}^{L} \Tr[U_l^\dagger s_l U_l s_{l-1}] \sum_{s_0, s_1 \in \mathcal{R}^{G}_n} \sum_{g_0, g_1 \in G} \Tr[U_2^\dagger s_2 U_2 g_1\cdot(s_{1})] \Tr[U_1^\dagger s_1 U_1 g_0 \cdot(s_{0})] \Tr[\rho \, s_{0}].
\end{equation}

%\begin{equation}
%    \sum_{s_1, \dots, s_L \in \mathcal{P}_n} \sum_{s_0 \in \mathcal{R}^{G}_n} \Tr[O s_L] \prod_{l=2}^{L} \Tr[U_l^\dagger s_l U_l s_{l-1}]
%    \left( \frac{1}{|G|} \sum_{g_0} \Tr[U_1^\dagger s_1 U_1 g_0 \cdot (s_0)] \right) \Tr[\rho s_0].
%\end{equation}
%Setting \( \tilde{g}_0 = g_1 g_0 \), w
The expressions of the form $\Tr[U_2^\dagger s_2 U_2 g_1\cdot(s_{1})] \Tr[U_1^\dagger s_1 U_1 g_0 \cdot(s_{0})] $ is saying that after Pauli propagation until layer $l=2$, we can simply propagate the orbit representatives in the last layer $l=1$.
We establish the merged form up the initial two layers.

\textbf{Inductive Step:}  
Suppose that for the beginning \( m-1 \) layers, we can merge Pauli strings into symmetry orbits. We show that this also holds for the \( m \)-th layer.

By the inductive assumption, propagating the orbit representatives up to the first $m-1$ layers gives an exact expectation value: 
\begin{align}
    &\Tr\left[ \rho \left( \prod_{l=1}^{L} U_l^\dagger \right) O \left( \prod_{l=1}^{L} U_l \right) \right] \nonumber \\
    = &\sum_{s_m, \dots, s_L \in \mathcal{P}_n} \prod_{l=m+1}^{L} \Tr[U_l^\dagger s_l U_l s_{l-1}] \sum_{s_0, \dots, s_{m-1} \in \mathcal{R}^{G}_n} \sum_{g_0, \dots, g_{m-1} \in G} \prod_{l=1}^{m} \Tr[U_l^\dagger s_l U_l g_{l} \cdot (s_{l-1})] \Tr[\rho s_0]. 
\end{align}
Now, rewriting the Pauli string in the next layer $l=m$ of the unitary evolution up to $m+1$ layers in terms of orbit representatives, the expectation value on the r.h.s. becomes
\begin{align}
	&\sum_{s_{m+1}, \dots, s_L \in \mathcal{P}_n} \prod_{l=m+2}^{L} \Tr[U_l^\dagger s_l U_l s_{l-1}] \sum_{s_m \in  \mathcal{P}_n}  \sum_{s_{m-1} \in \mathcal{R}^{G}_n} \sum_{g_{m-1} \in G} \Tr[U_{m+1}^\dagger s_{m+1} U_{m+1} s_m] \Tr[U_m^\dagger s_m U_m g_{m-1} \cdot(s_{m-1}) ] f(s_{m-1}) \nonumber \\
	= & \sum_{\substack{s_{m+1}, \dots, s_L \\ \in \mathcal{P}_n}} \prod_{l=m+2}^{L} \Tr[U_l^\dagger s_l U_l s_{l-1}] \sum_{\substack{s_{m-1}, s_m  \\ \in \mathcal{R}^{G}_n }} \sum_{\substack{g_{m-1}, g_m \\ \in G}}  % \nonumber \\
	\Tr[U_{m+1}^\dagger s_{m+1} U_{m+1} g_{m} \cdot(s_{m}) ] \, \Tr[U_m^\dagger g_m \cdot(s_m) U_m g_{m-1} \cdot(s_{m-1}) ] f(s_{m-1}), \label{eq:merging_proof_expec}
\end{align}

where $f$ is the following function 
\begin{equation}
	f(s_{m-1}) = \sum_{s_0, \dots, s_{m-2} \in \mathcal{R}^{G}_n} \sum_{g_0, \dots, g_{m-2} \in G} \prod_{l=1}^{m-1} \Tr[U_l^\dagger s_l U_l g_{l} \cdot (s_{l-1})] \Tr[\rho s_0].
\end{equation}
By the inductive assumption, $f$ is invariant under the symmetry action such that $f(g \cdot(s_{m-1})) = f(s_{m-1}), \, \forall g \in G$.

Therefore, applying the same symmetry merging argument using \lemref{lem:expectation-equib} for layer \( m \), leads to  the following: 
\begin{equation}
	\sum_{s_{m-1} \in \mathcal{R}^{G}_n} \sum_{g_{m-1} \in G} \Tr[U_m^\dagger g_m \cdot(s_m) U_m g_{m-1} \cdot(s_{m-1}) ] f(s_{m-1}) = 	\sum_{s_{m-1} \in \mathcal{R}^{G}_n} \sum_{g_{m-1} \in G}\Tr[U_m^\dagger s_m U_m g_{m-1} \cdot(s_{m-1}) ] f(s_{m-1}), \label{eq:merging_proof_mlayer}
\end{equation}

Using \equref{eq:merging_proof_mlayer}, we can simplify \equref{eq:merging_proof_expec} to
\begin{align}
& \sum_{\substack{s_{m+1}, \dots, s_L \\ \in \mathcal{P}_n}} \prod_{l=m+2}^{L} \Tr[U_l^\dagger s_l U_l s_{l-1}] \sum_{\substack{s_{m-1}, s_m  \\ \in \mathcal{R}^{G}_n }} \sum_{\substack{g_{m-1}, g_m \\ \in G}}  % \nonumber \\
	\Tr[U_{m+1}^\dagger s_{m+1} U_{m+1} g_{m} \cdot(s_{m}) ] \, \Tr[U_m^\dagger s_m U_m g_{m-1} \cdot(s_{m-1}) ] f(s_{m-1}), \nonumber \\
& =\sum_{\substack{s_{m+1}, \dots, s_L \\ \in \mathcal{P}_n}}  \prod_{l=m+2}^{L} \Tr[U_l^\dagger s_l U_l s_{l-1}] \sum_{s_0, \dots, s_{m} \in \mathcal{R}^{G}_n} \sum_{g_0, \dots, g_{m} \in G} \prod_{l=1}^{m} \Tr[U_l^\dagger s_l U_l g_{l} \cdot (s_{l-1})] \Tr[\rho s_0],
\end{align}
which says that we achieve the same expectation value after symmetric merging up to $m$ layers.

Thus, by induction, Pauli strings can be merged at every layer, completing the proof.
\end{proof}

\newpage
\section{Computational Complexity}\label{app:complexity}

In this section, we provide a rigorous analysis of computational cost of symmetry PP. First we provide a proof of \thmref{thm:space-complexity} in the main text.
\begin{theorem}[\textbf{Space complexity}]\label{thm:space-complexity-proof}
        
        Let $G$ be the discrete symmetry group acting on the set of $n$-qubit normalized Pauli strings $\mathcal{P}_n$. Suppose the observable at $l$-th layer has support on all of Pauli strings $\Tr[O_l s] \neq 0, \, \forall s\in \mathcal{P}_n$. The symmetry Pauli propagation defined in Algorithm~\hyperref[alg:symmetry]{1} outputs an observable $\widetilde{O}_l$ and reduces the space complexity by a ratio 
        \begin{equation}
            r = \frac{1}{4^n\abs{G} } \sum_{g\in G} \abs{\mathcal{P}_n^g} < 1,
        \end{equation} 
        where $\mathcal{P}^g_n = \{ g \cdot(s)\, | \, s \in \mathcal{P}_n\}$ corresponds to group action on Pauli strings. 
        % compared to the Pauli propagation without merging, 
\end{theorem}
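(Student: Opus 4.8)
The plan is to reduce the statement to a pure orbit-counting problem and then invoke Burnside's lemma exactly as packaged in \corref{cor:burnside_Pauli}. First I would fix the meaning of ``space complexity'': in both algorithms the dominant memory cost is the number of distinct Pauli coefficients that must be held simultaneously. Under the hypothesis $\Tr[O_l s]\neq 0$ for all $s\in\mathcal{P}_n$, standard PP must track one coefficient per Pauli string, i.e.\ $|\mathcal{P}_n| = 4^n$ coefficients. The first real step is then to argue that the merging subroutine (step~\ref{alg:merging} of Algorithm~\hyperref[alg:symmetry]{1}) collapses every coefficient of a common orbit $\operatorname{Orb}_G(s)$ onto the single representative $s\in\mathcal{R}^G_n$, so that $\widetilde{O}_l$ is supported only on $\mathcal{R}^G_n$ and symmetry PP stores at most $|\mathcal{R}^G_n|$ coefficients. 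Because the orbits form a disjoint cover $\mathcal{P}_n = \coprod_{s\in\mathcal{R}^G_n}\operatorname{Orb}_G(s)$ and the full-support assumption makes every orbit nonempty in the support, this count is exactly $|\mathcal{R}^G_n|$.

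Next I would simply form the ratio of the two storage counts and rewrite it via Burnside. Dividing the symmetry-PP count by the standard-PP count gives $r = |\mathcal{R}^G_n|/4^n$, and substituting the cardinality from \corref{cor:burnside_Pauli}, namely $|\mathcal{R}^G_n| = \tfrac{1}{|G|}\sum_{g\in G}|\mathcal{P}_n^g|$ with $\mathcal{P}_n^g$ the set of Paulis \emph{fixed} by $g$, yields
\begin{equation}
    r = \frac{|\mathcal{R}^G_n|}{4^n} = \frac{1}{4^n|G|}\sum_{g\in G}\abs{\mathcal{P}_n^g},
\end{equation}
which is precisely the claimed expression.

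For the strict inequality $r<1$, equivalently $|\mathcal{R}^G_n|<4^n$, I would isolate the identity term in the Burnside sum: the identity contributes $\abs{\mathcal{P}_n^e}=4^n$, while every other $g$ contributes $\abs{\mathcal{P}_n^g}\le 4^n$. As long as $G$ acts nontrivially on $\mathcal{P}_n$, at least one non-identity element fails to fix some Pauli string (for translations, e.g.\ $XI\cdots I$ is not invariant), so $\abs{\mathcal{P}_n^g}<4^n$ for that $g$ and the average over the group is strictly below $4^n$. Equivalently, the number of orbits equals $4^n$ only when every orbit is a singleton, i.e.\ when $G$ fixes every Pauli; excluding that degenerate case gives $r<1$.

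The main obstacle I anticipate is not the counting, which follows cleanly from Burnside, but making the comparison ``reduces the space complexity by a ratio'' precise and honest. In particular, the full-support hypothesis is exactly what pins standard PP to $4^n$ strings; without it one must restrict both counts to the support of $O_l$ and check that merging never maps a supported string onto an unsupported representative, which is guaranteed by $G$-invariance of the support but deserves a line. I would also flag a genuine notational caveat: the statement writes $\mathcal{P}_n^g=\{g\cdot(s)\mid s\in\mathcal{P}_n\}$, an \emph{image}, whereas the formula is correct only under the fixed-point reading of \corref{cor:burnside_Pauli}, since the image of a bijective group action always has cardinality $4^n$ and would force $r=1$.
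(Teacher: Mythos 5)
Your proposal is correct and takes essentially the same route as the paper's proof: identify the stored data with one coefficient per orbit representative, use the disjoint-orbit decomposition of $\mathcal{P}_n$, and apply Burnside's lemma (\corref{cor:burnside_Pauli}) to rewrite $\abs{\mathcal{R}^G_n}/4^n$ as the claimed group average. Your two refinements---an explicit argument for the strict inequality $r<1$ (which the paper merely asserts) and the observation that the theorem statement's $\mathcal{P}_n^g=\{g\cdot(s)\,\vert\, s\in\mathcal{P}_n\}$ must be read as the \emph{fixed-point} set of \corref{cor:burnside_Pauli} rather than the image, since the image of a bijective action always has cardinality $4^n$ and would force $r=1$---are both correct and tighten points the paper leaves implicit.
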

\begin{proof}
    The relative space complexity with respect to standard Pauli propagation is determined by the number of distinct operator trajectories that need to be evolved under symmetry Pauli propagation. The group $G$ acts on the normalized Pauli operators $\mathcal{P}_n$ through a left action $g\cdot (s)$, corresponding to symmetry transformations. For instance, consider the $3$-qubit Pauli group under translation group $\mathbb{Z}_3 = \{\hat{I}, \hat{T}, \hat{T}^2\}$. Then the action of $\hat{T}$ on a Pauli string is e.g. $\hat{T} \cdot (\frac{1}{\sqrt{2}^3} IZI) = \frac{1}{\sqrt{2}^3} IIZ$.
    This action partitions $\mathcal{P}_n$ into disjoint orbits:
    \begin{equation}
        \mathcal{P}_n = \coprod_{r_i \in \mathcal{R}^G_n} \operatorname{Orb}_G(r_i),
    \end{equation}
    where $\mathcal{R}^G_n$ is a complete set of orbit representatives (Definition~\ref{def:orbit-repr}) of the Pauli group.
    Each orbit corresponds to a set of Pauli strings related by the symmetry, so evolving one representative per orbit suffices by \lemref{lem:expectation-equib}. By assumption, $O_l$ has support on all Paulis and the merged observable using symmetry-merging algorithm is
    \begin{equation}
        \widetilde{O}_l = \sum_{s \in \mathcal{R}^G_n} s \left(\sum_{g\in G} \Tr\left[g\cdot(s) O_l\right]\right).
    \end{equation}
    Consider the case where $\widetilde{O}_l$ has support over all the representatives,  then the number of effective Pauli trajectories equals the number of distinct orbits,
    \begin{equation}
       \abs{\mathcal{R}^G_n} = \frac{1}{\abs{G}} \sum_{g\in G} \abs{\mathcal{P}_n^g},
    \end{equation}
    by Burnside’s lemma (\corref{cor:burnside_Pauli}).
    This quantity directly measures how many unique classes of Pauli operators remain after merging symmetry-equivalent Paulis. If the coefficients for some of the representatives are zero, then the support of the merged observable will be less than the size of the orbit representatives
    \begin{equation}
        \abs{\widetilde{O}_l} < \abs{\mathcal{R}^G_n}.
    \end{equation}
    
    In the absence of symmetry, $\abs{\mathcal{P}_n} = 4^n$, so naive Pauli propagation requires $\mathcal{O}(4^n)$ updates in the worst case. We can upper bound the relative support comparing the merged observable and the full observable being propagated as
    \begin{equation}
        \frac{\abs{\widetilde{O}_l}}{\abs{O_l}} \leq  \frac{\abs{\mathcal{R}^G_n}}{4^n}. 
    \end{equation}
    
     Therefore we define a \emph{ratio} $r$ that quantifies the fraction of operator space that are distinct under symmetry transformation as
\begin{align}
    r : & = \frac{\abs{\mathcal{R}^G_n}}{4^n},  \nonumber \\
    &= \frac{1}{\abs{G}} \frac{1}{4^n} \sum_{g\in G} \abs{\mathcal{P}_n^g}.
\end{align}

    Thus, symmetry-merged Pauli propagation reduces the exponential prefactor by the average group orbit size.
    The complexity improvement depends on the symmetry group and the sparsity of invariant Paulis under its action.
\end{proof}

We have formally considered the space complexity by counting the Pauli support in \thmref{thm:space-complexity-proof}, assuming that the observable $O_l$ has a full support in $\mathcal{P}_n$. In practice, the difference between merged and original observables depends on how the Paulis contributions are distributed in their orbits. In general, we may benefit from choosing the number of layers for merging as a hyperparameter. In this work, we consider the simplest scenario of utilizing the symmetry Pauli propagation for all the layers and we comment on scenarios where a more sophisticated design might be beneficial.
\begin{observation}
    Consider an observable that is supported exactly on the orbit representatives 
    \begin{equation}
        O = \sum_{s \in \mathcal{R}^G_n}  \Tr\left[s O\right] s.
    \end{equation}
    Then the symmetry Pauli propagation leaves the observable unchanged $\widetilde{O} = O$ and they have the same amount of Pauli support $\abs{\widetilde{O}} = \abs{O}$.
\end{observation}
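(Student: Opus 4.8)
The plan is to evaluate the merging map of step~1 of Algorithm~\hyperref[alg:symmetry]{1} directly on an observable whose Pauli support already lies inside $\mathcal{R}^G_n$, and to show that each representative's coefficient is reproduced while no new Paulis are generated. First I would write the hypothesis as $O = \sum_{s' \in \mathcal{R}^G_n} c_{s'}\, s'$ with $c_{s'} = \Tr[s' O]$ and zero coefficient on every non-representative, then substitute this into the merging formula \equref{eq:O-merging}, giving $\widetilde{O} = \sum_{s \in \mathcal{R}^G_n}\big(\sum_{g\in G}\Tr[O\, g\cdot(s)]\big)\, s$. By linearity the inner trace expands as $\Tr[O\, g\cdot(s)] = \sum_{s'\in\mathcal{R}^G_n} c_{s'}\,\Tr[s'\, g\cdot(s)]$, which reduces the whole computation to understanding the overlaps $\Tr[s'\, g\cdot(s)]$.

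The key step is the orthonormality of the normalized Pauli basis, $\Tr[s'\, p] = \delta_{s',p}$ for any Pauli $p$, combined with the disjointness of orbits from Definition~\ref{def:orbit-repr}. Since $g\cdot(s)$ always lies in $\operatorname{Orb}_G(s)$, and the unique representative contained in that orbit is $s$ itself, the delta $\Tr[s'\, g\cdot(s)]$ can be nonzero only when $s' = g\cdot(s) = s$, i.e.\ when $g$ fixes $s$. Hence every cross-orbit contribution vanishes and the $s$-term of $\widetilde{O}$ collapses to a multiple of $c_s$ alone. This immediately shows that the support of $\widetilde{O}$ is exactly $\{s \in \mathcal{R}^G_n : c_s \neq 0\}$, which is the support of $O$, so $\abs{\widetilde{O}} = \abs{O}$.

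The hard part will be pinning down the precise multiplicity with which each representative is counted, which is governed by the stabilizer $\operatorname{Stab}_G(s) = \{g\in G : g\cdot(s)=s\}$: the collapse above yields $\widetilde{O} = \sum_{s\in\mathcal{R}^G_n} \abs{\operatorname{Stab}_G(s)}\, c_s\, s$, so the operator identity $\widetilde{O}=O$ holds cleanly precisely when $G$ acts freely on the support of $O$, making $\abs{\operatorname{Stab}_G(s)}=1$. I would therefore either restrict to this generic case--noting for instance that under translation $\mathbb{Z}_n$ the only Paulis with nontrivial stabilizer are the period-dividing strings such as the identity string $III\cdots I$--or fold the stabilizer factor into the normalization of the merging sum so that it effectively runs over distinct orbit elements. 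Under either reading the support claim $\abs{\widetilde{O}}=\abs{O}$ is untouched, and the operator identity follows once this multiplicity is normalized.
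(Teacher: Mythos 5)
Your proposal is correct, and it actually does more than the paper, which states this Observation without any proof. Your direct computation -- substituting $O = \sum_{s'\in\mathcal{R}^G_n} c_{s'}\, s'$ into \equref{eq:O-merging} and using orthonormality of the normalized Pauli basis plus disjointness of orbits to kill every cross-orbit overlap -- is the natural argument, and it correctly yields $\widetilde{O} = \sum_{s\in\mathcal{R}^G_n}\abs{\operatorname{Stab}_G(s)}\, c_s\, s$. The stabilizer multiplicity you isolate is a genuine subtlety that the paper glosses over: taken literally, the sum over $g \in G$ in \equref{eq:O-merging} over-counts each representative by $\abs{\operatorname{Stab}_G(s)}$, so the operator identity $\widetilde{O}=O$ fails whenever the support of $O$ contains a Pauli with nontrivial stabilizer (for instance the identity string, whose stabilizer is all of $G$, or any fully periodic string under $\mathbb{Z}_n$). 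The paper's actual implementations resolve this implicitly: the exhaustive merging subroutine of \appref{app:merging_exhaustive} and the lowest-integer subroutines of Algorithms 2 and 3 add each distinct Pauli's coefficient exactly once into its orbit representative, i.e.\ they sum over orbit elements rather than group elements, and under that reading the merged coefficient of a representative $s$ is $\sum_{p\in\operatorname{Orb}_G(s)}\Tr[O\, p] = \Tr[O\, s] = c_s$, giving $\widetilde{O}=O$ exactly as the Observation claims. Your two proposed repairs (restricting to a free action, or normalizing the group sum into an orbit sum) are precisely the right ones, and your support-counting conclusion $\abs{\widetilde{O}}=\abs{O}$ is robust under either reading, since $\abs{\operatorname{Stab}_G(s)}\ge 1$ can never annihilate a nonzero coefficient.
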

This leads to the intuition that if an observable is sparsely distributed across different Pauli orbits, then it may not be beneficial to do symmetry-merging. Next we consider scenarios where it is highly beneficial to perform symmetry merging. 

\begin{observation}\label{obs:orbit-lower-bound}
    Suppose an observable is uniformly supported on Paulis contained in a single orbit represented by $s_0$ such that
    \begin{equation}
        O = \sum_{s \in \operatorname{Orb}_G(s_0)} \Tr[s O] s.
    \end{equation}
    By Lagrange's theorem, the size of an orbit is a divisor of the group size such that $\abs{G}/\abs{\operatorname{Orb}_G(s)} \in \mathbb{Z}^+, \, \forall s \in \mathcal{P}_n$. The relative support of a merged observable is then lower bounded by the size of the group
    \begin{equation}
        \frac{1}{\abs{G}} \leq \frac{\abs{\widetilde{O}_l}}{\abs{O_l}}.
    \end{equation}
    And the equality is achieved precisely when the size of the orbit is equal to the group size $\abs{\operatorname{Orb}_G(s)} = \abs{G}$. 
\end{observation}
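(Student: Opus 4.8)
The plan is to directly evaluate both the numerator and denominator of the ratio $\abs{\widetilde{O}_l}/\abs{O_l}$ under the single-orbit hypothesis and then invoke the orbit-stabilizer theorem to convert an orbit-size inequality into the claimed bound. Throughout I would work in the normalized Pauli basis, where orthonormality gives $\Tr[s'\, O] = c_{s'}$ for the coefficient $c_{s'}$ of $s'$ in $O$. First I would count the support of $O$: since $O$ is supported exactly on the orbit $\operatorname{Orb}_G(s_0)$, counting nonzero coefficients yields $\abs{O} = \abs{\operatorname{Orb}_G(s_0)}$.

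Next I would compute the support of the merged operator using $\widetilde{O} = \sum_{s \in \mathcal{R}^G_n} s \big( \sum_{g\in G} \Tr[g\cdot(s)\, O] \big)$. For any representative $s \in \mathcal{R}^G_n$ lying in an orbit different from $\operatorname{Orb}_G(s_0)$, every translate $g\cdot(s)$ sits outside the support of $O$, so the inner sum vanishes; hence at most the representative of $\operatorname{Orb}_G(s_0)$ survives, giving $\abs{\widetilde{O}} \le 1$. To upgrade this to $\abs{\widetilde{O}} = 1$, I would evaluate the surviving coefficient: because $g\cdot(s_0)$ ranges over the orbit and hits each element exactly $\abs{\operatorname{Stab}_G(s_0)}$ times, one finds $\sum_{g\in G} c_{g\cdot(s_0)} = \abs{\operatorname{Stab}_G(s_0)} \sum_{s' \in \operatorname{Orb}_G(s_0)} c_{s'}$. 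The uniform-support hypothesis, that all $c_{s'}$ equal a common nonzero value, makes this sum nonzero, so indeed $\abs{\widetilde{O}} = 1$.

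Combining the two counts gives $\abs{\widetilde{O}}/\abs{O} = 1/\abs{\operatorname{Orb}_G(s_0)}$. Applying the orbit-stabilizer theorem $\abs{G} = \abs{\operatorname{Orb}_G(s_0)} \cdot \abs{\operatorname{Stab}_G(s_0)}$ shows $\abs{\operatorname{Orb}_G(s_0)}$ divides $\abs{G}$ (Lagrange), so $\abs{\operatorname{Orb}_G(s_0)} \le \abs{G}$; inverting yields the claimed lower bound $1/\abs{G} \le \abs{\widetilde{O}}/\abs{O}$. Equality forces $\abs{\operatorname{Orb}_G(s_0)} = \abs{G}$, equivalently a trivial stabilizer, which is exactly the stated equality condition. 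The only delicate point, and the main obstacle, is ruling out destructive cancellation among the orbit coefficients, which could otherwise collapse $\abs{\widetilde{O}}$ to zero and break the count; the uniformity hypothesis is precisely the clean assumption that prevents this.
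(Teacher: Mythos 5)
Your proposal is correct and follows essentially the same route as the paper, which states this observation with only the Lagrange/orbit-stabilizer argument sketched: the support of $O$ is the orbit size, the merged operator collapses to a single representative, and $\abs{\operatorname{Orb}_G(s_0)} \le \abs{G}$ gives the bound with equality exactly at trivial stabilizer. Your explicit treatment of the potential cancellation issue (the merged coefficient being $\abs{\operatorname{Stab}_G(s_0)}\sum_{s'} c_{s'}$, nonzero by the uniformity hypothesis) is a detail the paper leaves implicit, and it is handled correctly.
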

Finding the largest orbit for an arbitrary group is an open question, to the best of our knowledge~\cite{qian2021large}, hence this lower bound is not necessarily achievable. Nevertheless, for most common symmetry groups appearing in the context of actions on Pauli strings in quantum simulations, our lower bound in \obsref{obs:orbit-lower-bound} is usually satisfied. In the rest of the section, we analyze the time complexity of our algorithm and comment on its relations to practical simulations. 

On a general level, there is a trade-off between the additional time spent performing merging and the time saved from propagating fewer Pauli operators. The more time we spent on merging, the more time we also save by needing to propagate fewer Paulis. We provide a theoretical bound on the time complexity aspects of symmetry PP in \propref{prop:time-complexity} as well as a numerical analysis of such time in practice in \supfigref{supfig1:ising_time}.

\setcounter{proposition}{0}
\begin{proposition}[\textbf{Time complexity}]\label{prop:time-complexity}
    Let $U=\prod_{l=1}^{L} U_l$ be an $l$-layered unitary symmetric under group $G$ and suppose each symmetric layer contains $K_l$ number of Pauli gates such that $U_l = \prod_{k=1}^{K_l} V_k$. Denote by $\abs{O_l}$ the number of Pauli terms in the operator expansion at layer $l$, and by $\abs{\tilde{O}_l}$ the corresponding number after symmetry merging.
    The total computational time of the symmetry Pauli propagation has two contributions as
    \begin{equation}
        T_{\mathrm{sym}} = \sum_{l=1}^{L} (T_{m,l} + T_{p,l}),
    \end{equation}
    where $T_{m,l}$ represents time spent merging Paulis and $T_{p, l}$ represents time spent on propagation. 
    % Let $\lambda = \mathrm{max}_{\{V_j\}}(\abs{})$ be the maximum number of branching for all gates. 
    More precisely, we have
    \begin{align}
        T_{m,l} &= \mathcal{O}(\abs{\tilde{O}_{l-1}}\abs{G}), \\
        T_{p,l} &= \mathcal{O}(2^{K_l}\abs{\tilde{O}_l}),
    \end{align}
    with $K_l$ the number of local gates in layer $l$ acting on distinct supports.
    
\end{proposition}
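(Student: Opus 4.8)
The plan is to first confirm the additive decomposition and then bound the two per-layer contributions independently. The form $T_{\mathrm{sym}} = \sum_{l=1}^{L}(T_{m,l}+T_{p,l})$ is immediate from the structure of Algorithm~\hyperref[alg:symmetry]{1}: the loop runs once over each of the $L$ layers, and within layer $l$ it performs exactly one propagation substep (2a), producing $O_{l-1}=U_l^\dagger \widetilde{O}_l U_l$, followed by exactly one merging substep (2b), producing $\widetilde{O}_{l-1}$ via \equref{eq:O-merging}. I would therefore define $T_{p,l}$ as the cost of (2a) and $T_{m,l}$ as the cost of (2b), so that the total is simply the sum of these disjoint contributions, and it remains only to bound each one.

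For the propagation cost $T_{p,l}$ I would track how the number of Pauli terms grows as the $K_l$ gates $V_k$ of the layer are applied one at a time to $\widetilde{O}_l$. Each $V_k$ is a Pauli rotation, so acting on a single Pauli string $s$ it leaves $s$ invariant when its generator commutes with $s$, and otherwise maps $s$ to a linear combination of two strings; in either case the number of distinct terms at most doubles, and the work performed is proportional to the number of terms currently present. Starting from $\abs{\widetilde{O}_l}$ terms, the operator contains at most $2^{k}\abs{\widetilde{O}_l}$ terms after $k$ gates, so the work for the $k$-th gate is $\mathcal{O}(2^{k-1}\abs{\widetilde{O}_l})$. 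Summing the geometric series $\sum_{k=1}^{K_l} 2^{k-1}\abs{\widetilde{O}_l}$, which is dominated by its last term, yields $T_{p,l}=\mathcal{O}(2^{K_l}\abs{\widetilde{O}_l})$.

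For the merging cost $T_{m,l}$ I would read the complexity directly off \equref{eq:O-merging}: the merged operator is the double sum $\widetilde{O}_{l-1}=\sum_{s\in\mathcal{R}^{G}_n}\sum_{g\in G}\Tr[O_{l-1}\,g\cdot(s)]\,s$, and after discarding representatives with vanishing coefficient only $\abs{\widetilde{O}_{l-1}}$ values of $s$ contribute. For each such representative one sweeps over the $\abs{G}$ group elements, applies $g$ to $s$, and retrieves $\Tr[O_{l-1}\,g\cdot(s)]$ by an $\mathcal{O}(1)$ lookup in the coefficient table of $O_{l-1}$, giving $\mathcal{O}(\abs{\widetilde{O}_{l-1}}\abs{G})$ operations as claimed. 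The main obstacle I anticipate is justifying this \emph{output-sensitive} count rather than a naive one: it requires implementing the merge as a lookup over the orbits of surviving representatives (as in the subroutines of \appref{app:merging-subroutines}) instead of a canonicalization sweep over every term of $O_{l-1}$, and it requires charging each group action and table lookup at unit cost, absorbing the per-qubit $\mathrm{poly}(n)$ bitstring bookkeeping into the constant. One should likewise flag that the $2^{K_l}$ branching factor in $T_{p,l}$ relies on the gates being Pauli rotations; more general gates would replace it by the appropriate per-gate fan-out.
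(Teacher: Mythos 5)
Your proof is correct and follows essentially the same route as the paper's: the same per-layer decomposition into a propagation step (2a) and a merging step (2b), the same doubling argument for Pauli rotations giving the $2^{K_l}$ fan-out in $T_{p,l}$, and the same charge of $\abs{G}$ group actions per retained term for $T_{m,l}$. If anything, your bookkeeping is slightly more careful than the paper's — the explicit geometric series for the per-gate costs, and the flagged distinction between an output-sensitive orbit-lookup merge and a canonicalization sweep over all pre-merge terms (a subtlety the paper's proof glosses over by citing the exhaustive procedure while stating the bound in terms of the post-merge count $\abs{\widetilde{O}_{l-1}}$).
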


\begin{proof}

   We separate the analysis into two components: the merging cost $T_{m, l}$ and the propagation cost $T_{p,l}$.

    \paragraph{(1) Merging time complexity.}
    At layer $l$, the observable evolves as
    \begin{equation}
        O_{l-1} = U_l^{\dagger} \tilde{O}_l U_l,
    \end{equation}
    where $\tilde{O}_l$ denotes the post-merged operator from the previous layer.
    In the brute-force merging procedure (\appref{app:merging_exhaustive}), each Pauli in $\tilde{O}_{l-1}$ is checked against its symmetry orbit under $G$, requiring at most $\abs{G}$ group actions per element.
    Hence, the worst-case merging time is bounded by
    \begin{equation}
        T_{m,l} = \mathcal{O}(\abs{\tilde{O}_{l-1}}\abs{G}).
    \end{equation}

    \paragraph{(2) Propagation time complexity.}
    Each layer $U_l$ consists of $K_l$ Pauli gates $V_k$ (e.g., two-qubit gates) written as
    \begin{equation}
        V_k = e^{i \theta P}, 
    \end{equation}
    for a rotation angle $\theta$ and Pauli string $P$. Given a Pauli string $Q$, the evolved operator is 
    \begin{equation}
        V_k^{\dagger } Q V_k = 
        \begin{cases}
            Q, & \commu{P}{Q} = 0 \\
            \cos(\theta) Q - i \sin(\theta) P Q,  &\{P, Q\}=0.
        \end{cases}
    \end{equation}
    Propagating a single Pauli string with a Pauli gate results in at most $2$ Paulis, and across the layer produces at most $2^{K_l}$ output terms. 
    Therefore, the propagation time for symmetry-merged propagation is
    \begin{equation}
        T_{p,l} = \mathcal{O}(2^{K_l}\abs{\tilde{O}_l}),
    \end{equation}
    compared to $\mathcal{O}(2^{K_l}\abs{O_l})$ for standard Pauli propagation without merging.

    Combining the two contributions,
    \begin{equation}
        T_{\mathrm{sym}} = \sum_{l=1}^{L} \Big[ \mathcal{O}(\abs{\tilde{O}_{l-1}}\abs{G}) + \mathcal{O}(2^{K_l}\abs{\tilde{O}_l}) \Big].
    \end{equation}
    Define an effective reduction ratio $r_l = \abs{\tilde{O}_l} / \abs{O_l}$, which is upper bounded by the ratio in Theorem~\ref{thm:space-complexity} as $r_l < r$ , the total runtime relative to standard Pauli propagation scales as
    \begin{equation}
        \frac{T_{\mathrm{sym}}}{T_{\mathrm{PP}}} 
        \simeq 
        \frac{\sum_l (2^{K_l}r_l\abs{O_l} + \abs{G} \abs{U_l^{\dagger}\widetilde{O}_{l} U_l}) )}
             {\sum_l 2^{K_l}\abs{O_l}}.
    \end{equation}

    The symmetry Pauli propagation thus improves the time complexity of Pauli propagation by reducing the number of Pauli trajectories that must be evolved, at the cost of a modest merging overhead that scales with the symmetry group size.
    
\end{proof}
    We have formalized the time complexity of the symmetry Pauli propagation and provide a summary for examples of groups considered in this work in \tableref{tab:symmetry-complexity}. In practice, the overhead from merging Paulis is often subdominant, as the merging step can be implemented using canonical-representative computations or lookup tables.
    We illustrate the practical time saving in \supfigref{supfig1:ising_time}.
\newpage

\newpage

\begin{table}[h!]
\centering

\label{tab:symmetry-complexity}
\vspace{5mm}
\begin{tabular}{|l|ccc|}
\toprule
\textbf{Symmetry group $G$} & \textbf{Group order $\abs{G}$} 
& \textbf{Space complexity }  & \textbf{Reduction factor $r$ (asymptotic)}\\ 
% $\abs{\mathcal{R}^G_n}$
\hline
No symmetry ($G=\{e\}$) & $1$ & $4^n$  & $1$ \\

Translation $\mathbb{Z}_n$ & $n$  & $\frac{1}{n}(4^n + 4(n-1)))$ & $\sim 1/n$ \\

Dihedral $D_n$ & $2n$  & $\cdot$ & $\sim 1/(2n)$ \\

Permutation $S_n$ & $n!$ & $\binom{n+3}{3} $ & $n^3 / 4^n$\\

General finite group $G$ & $\abs{G}$  & $\abs{\mathcal{R}^G_n}$ & $\sim \abs{\mathcal{R}^G_n}/4^n$\\
\hline
\end{tabular}
\caption{\textbf{Summary of computational complexity under different symmetry groups.}
Here $\abs{G}$ is the group order, $n$ the number of qubits, and $K_l$ the number of Pauli gates per layer.
The reduction factor $r$ defined in \thmref{thm:space-complexity} quantifies the fraction of orbit representatives retained after symmetry merging, assuming an observable with the support of the full set of Pauli strings.
}
\vspace{2mm}
\end{table}

\begin{figure}[h!]
    \centering
    \includegraphics[width=0.4\linewidth]{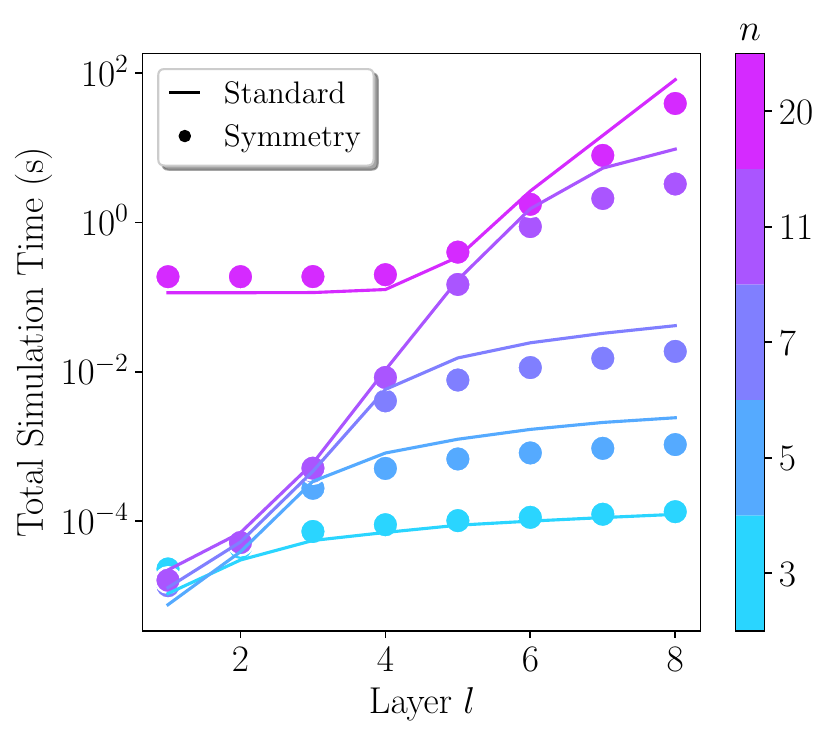}
    \caption{\textit{Simulation time in $1d$ Ising dynamics.} Scaling of the computational time ($second$) with circuit layers $l$ for system sizes $n = 3$ to $n=20$ (color bar). Circles denote standard PP and triangles denote the symmetry PP under translation symmetry.  }
    \label{supfig1:ising_time}
\end{figure}

\newpage
\section{Symmetry-merging Subroutines for Pauli Propagation}\label{app:merging-subroutines}

In this appendix, we outline the practical steps for the merging subroutine in step~\hyperref[alg:merging]{2} that applies to any symmetry group. This process identifies all Pauli strings that are equivalent under symmetry transformations and merges them into a single representative from their orbit. By reducing redundant terms, we improve the efficiency of Pauli propagation. First we outline the exhaustive merging procedure in \appref{app:merging_exhaustive}. Finally we outline the practical algorithms via lowest integer mapping for symmetry merging subroutines in \appref{app:merging_subroutines}.

\subsection{A general subroutine: exhaustive search}\label{app:merging_exhaustive}

When propagating an observable $O$ through a quantum circuit, the number of Pauli strings can grow exponentially. However, if the system has a symmetry group $G$, many of these Pauli strings are redundant because they transform into one another under symmetry operations. Instead of tracking all Pauli terms individually, we can group them into symmetry orbits and only propagate a reduced set of representatives.

At the $l$-th layer of the circuit, the observable is rewritten as:
\begin{equation}
    \widetilde{O}_l = \sum_{s \in \mathcal{R}^{G}_n}\sum_{g \in G} \Tr[O_l\, g \cdot (s)]s,
\end{equation}
where $\mathcal{R}^{G}_n$ are the non-equivalent orbit representatives (defined in \defref{def:orbit-repr}). After merging, only this reduced set of Pauli strings needs to be propagated through the circuit. The merging procedure follows these steps:

\begin{enumerate}
    \item \textbf{Initialize the merged Pauli Set $\widetilde{\mathcal{D}} = \varnothing$:}  
Start with a set $\mathcal{D}$ of Pauli strings with nonzero amplitudes in the decomposition of the observable $O$:
        \begin{equation}
            \mathcal{D} = \{(s, \Tr[O s]) \mid \Tr[O s] \neq 0, \, s \in \mathcal{P}_n\}.
        \end{equation}

    \item \textbf{Loop Over Each Pauli String and Check Symmetry Equivalence:}  
    For each Pauli string $s$ in $\mathcal{D}$, check whether it belongs to an existing symmetry orbit. Apply all symmetry transformations $g \in G$ to generate the orbit of $s$:
        \begin{equation}
            \operatorname{Orb}_G(s) = \{ g \cdot s \mid g \in G \}.
        \end{equation}
        If another Pauli string $\widetilde{s}$ in $\widetilde{\mathcal{D}} = \{\left(\widetilde{s}, a_{\widetilde{s}}\right)\}$ is already in the orbit, so that $\widetilde{s} = g \cdot s$ for some $g$, merge their coefficients:
        \begin{equation}
            a_{\widetilde{s}} \leftarrow a_{\widetilde{s}} + \Tr[O s].
        \end{equation}
        Otherwise, store $s$ as a new representative in $\widetilde{\mathcal{D}} \leftarrow \widetilde{\mathcal{D}} + \left(s,  \Tr[O s] \right) $.

    \item \textbf{Construct the Merged Observable:}  
    The final grouped observable is given by:
        \begin{equation}
            \widetilde{O} = \sum_{s \in \widetilde{\mathcal{D}}} a_s s.
        \end{equation}
    
\end{enumerate}

While this exhaustive merging procedure significantly reduces memory overhead by tracking only symmetry-distinct Pauli strings, it can be computationally expensive when the symmetry group $G$ is large. To address this, alternative merging methods, such as canonical forms given by the lowest integer representations (see Algorithm~\hyperref[alg:translation-int]{2} and Algorithm~\hyperref[alg:permutation-int]{3}), may be more practical especially for certain symmetries like permutation groups.

\subsection{Practical subroutines: lowest integer mapping}\label{app:merging_subroutines}

In this section, we discuss our practical implementations considering two examples of translation and permutation symmetry. In general, finding a minimal set of representatives is a challenging task for an arbitrary symmetry. 
% Prior works considered heuristic strategies such as greedy search algorithm for translational symmetry~\cite{loizeauQuantumManybodySimulations2024, haghshenas2025digital}. An example greedy strategy is to translate a Pauli string such that the first Pauli is non-identity~\cite{loizeauQuantumManybodySimulations2024}. While this strategy does lead to improved efficiency in simulations with translational symmetry, it is not clear how it generalizes to arbitrary symmetries. Moreover, we find empirically that greedy schemes may introduce additional Pauli strings when simulating shallow circuits, reducing their practical benefit.
In this work we adopt a canonical mapping based on the \emph{lowest integer representation}, which provides a consistent and symmetry-agnostic way to identify orbit representatives.

\begin{definition}[\textbf{Lowest integer representation}]\label{def:low-int}
    Let $G$ be the symmetry group, and $\operatorname{Orb}_G(s_0)$ be the orbits of a Pauli string $s_0$. Define a binary representation of a Pauli string by the a base 4 numeral such that a pair of bits represent a Pauli operator 
    \begin{equation}
        I \rightarrow 00, \,\,\,
        X \rightarrow 01,\,\,\,
        Y \rightarrow 10,\,\,\,
        Z \rightarrow 11.
    \end{equation}
    For an $n$-qubit Pauli string, this defines a binary vector $b \in \mathbb{Z}_2^{2n}$. 
        Its corresponding integer representation is
        \begin{equation}
            \operatorname{int}: \mathbb{Z}_2^{2n} \rightarrow \mathbb{Z}, 
            \qquad 
            \operatorname{int}(b) = \sum_{i=0}^{2n-1} b_i \, 2^i.
        \end{equation}
        The \emph{lowest integer representation} of $s_0$ is then defined as
        \begin{equation}
            \mathrm{argmin}_{s \in \operatorname{Orb}_G(s_0)} \operatorname{int}(s).
        \end{equation}
\end{definition}

The binary representation convention is equivalent to the more familiar \emph{symplectic representation} of Pauli strings,
\begin{equation}
    P = \prod_{i=1}^n X^{a_i} Z^{b_i}, \qquad a_i, b_i \in \{0, 1\},
\end{equation}
where multiplication of Pauli operators corresponds to addition modulo two in the symplectic vector space~\cite{rudolph2025pauli}. 
Here, we follow the base-$4$ numeral conventions as Ref.~\cite{rudolph2025pauli}, to which we refer the reader for further implementation details.

Finally, we outline two examples of symmetry merging using the lowest integer representations of Pauli strings for translational symmetry (Algorithm~\hyperref[alg:translation-int]{2}) and permutation symmetry (Algorithm~\hyperref[alg:permutation-int]{3}).

\noindent\rule{17.5cm}{0.8pt}

\noindent
\textbf{Algorithm 2} Symmetry merging subroutine for translation $\mathbb{Z}_n$ (Lowest Integer Representation)\label{alg:translation-int}
\vspace{-2mm}

\noindent\rule{17.5cm}{0.4pt}

\begin{algorithmic}[1]\label{alg:zn-low-int}
\Require Observable $O = \sum_{s \in \mathcal{P}_n} \Tr[Os]\, s$. \\
Set of Pauli–coefficient pairs $\mathcal{D} = \{ (\Tr[Os], s)\, |\, \Tr[Os] \neq 0 \}$.
\Ensure Observable $\tilde{O}$ after merging Pauli strings within the same symmetry orbit.
\vspace{2mm}

\State {\bf Initialize:} $\widetilde{\mathcal{D}} \gets \varnothing$. \Comment{Initialize an empty merged set.}
\For{$(\Tr[Os], s) \in \mathcal{D}$} \Comment{Enumerate all Pauli strings in $O$.}
    \State Compute the lowest integer representation:
    \For{$j = 0, \dots, n-1$}
        \State Compute $\operatorname{int}(\hat{T}^{j}(s))$.
    \EndFor
    \State Set $s_0 \gets \mathrm{argmin}_{s' \in \operatorname{Orb}_G(s)} \operatorname{int}(s')$. \Comment{Orbit representative.}
    \If{$(\cdot, s_0) \in \widetilde{\mathcal{D}}$}
        \State Update coefficient: $a_{s_0} \gets a_{s_0} + \Tr[Os]$.
    \Else
        \State Append pair $(\Tr[Os], s_0)$ to $\widetilde{\mathcal{D}}$.
    \EndIf
\EndFor
\State \textbf{Output:} Merged observable $\widetilde{O} = \sum_{(a_s, s) \in \widetilde{\mathcal{D}}} a_s s$.
\end{algorithmic}

\vspace{-2mm}

\noindent\rule{17.5cm}{0.8pt}

Notice that computing the lowest integer representation generally requires iterating over all elements within the same group orbit. 
For example, under translational symmetry $\mathbb{Z}_n$, this incurs a computational cost proportional to the group size, $\abs{\mathbb{Z}_n} = n$. 
However, this scaling with $\abs{G}$ is not always necessary. 
A notable exception occurs for permutation symmetry, as illustrated in Algorithm~\hyperref[alg:permutation-int]{3}, where the lowest integer representation can be determined directly without explicit enumeration of all group elements.
This is the case for permutation symmetry described next in Algorithm~\hyperref[alg:permutation-int]{3}. Additionally, finding this representative is embarrassingly parallel between the Pauli strings, which lends itself to parallelization on CPU or GPU.

\newpage
\noindent\rule{17.5cm}{0.8pt}

\noindent
\textbf{Algorithm 3} Symmetry merging subroutine for permutation $S_n$ (Lowest Integer Representation) \label{alg:permutation-int}
\vspace{-2mm}

\noindent\rule{17.5cm}{0.4pt}

\begin{algorithmic}[1]
\Require Observable $O = \sum_{s \in \mathcal{P}_n} \Tr[Os]\, s$. \\
Set of Pauli–coefficient pairs $\mathcal{D} = \{ (\Tr[Os], s)\, |\, \Tr[Os] \neq 0 \}$.
\Ensure Observable $\tilde{O}$ after merging Pauli strings within the same symmetry orbit.
\vspace{2mm}

\State {\bf Initialize:} $\widetilde{\mathcal{D}} \gets \varnothing$. \Comment{Initialize an empty merged set.}
\For{$(\Tr[Os], s) \in \mathcal{D}$} \Comment{Enumerate all Pauli strings in $O$.}
    \State Compute the lowest integer representation:
    \State Count the number of occurrences of each non-identity Pauli as $(n_X, n_Y, n_Z)$.
    Set $s_0 \gets \operatorname{int}(Z^{\otimes n_Z} Y^{\otimes n_Y} X^{\otimes n_X})$. \Comment{Orbit representative.} \label{lag:perm-represent}
    \If{$(\cdot, s_0) \in \widetilde{\mathcal{D}}$}
        \State Update coefficient: $a_{s_0} \gets a_{s_0} + \Tr[Os]$.
    \Else
        \State Append pair $(\Tr[Os], s_0)$ to $\widetilde{\mathcal{D}}$.
    \EndIf
\EndFor
\State \textbf{Output:} Merged observable $\widetilde{O} = \sum_{(a_s, s) \in \widetilde{\mathcal{D}}} a_s s$.
\end{algorithmic}

\vspace{-2mm}
\noindent\rule{17.5cm}{0.4pt}

In Line~\hyperref[lag:perm-represent]{5}, the lowest integer representative is determined simply by counting the number of each Pauli type $(X, Y, Z)$ within a string. The direct counting is significantly more efficient than enumerating over the entire permutation group, which scales as $\abs{S_n}=n!$.

\end{appendix}

\end{document}